\newcommand*\Let[2]{\State #1 $\gets$ #2}
\DeclareMathOperator*{\E}{\mathbb{E}}
\DeclareMathOperator*{\pr}{\text{Pr}}
\newcommand*{\pov}{\text{POV}}
\newcommand*{\mov}{\text{MOV}}
\newtheorem{theorem}{Theorem}
\theoremstyle{definition}
\begin{document}

\title{Controlling Elections through Social Influence}  



\author{Bryan Wilder\\Department of Computer Science and Center for Artificial Intelligence in Society\\University of Southern California\\bwilder@usc.edu \and  Yevgeniy Vorobeychik\\Department of Electrical Engineering and Computer Science\\Vanderbilt University\\yevgeniy.vorobeychik@vanderbilt.edu}  
\date{}
\maketitle
\begin{abstract}  
Election control considers the problem of an adversary who attempts to tamper with a voting process, in order to either ensure that their favored candidate wins (constructive control) or another candidate loses (destructive control). 
As online social networks have become significant sources of information for potential voters, a new tool in an attacker's arsenal is to effect control by harnessing social influence, for example, by spreading fake news and other forms of misinformation through online social media.

We consider the computational problem of election control via social influence, studying the conditions under which finding good adversarial strategies is computationally feasible. We consider two objectives for the adversary in both the constructive and destructive control settings: probability and margin of victory (POV and MOV, respectively). 
We present several strong negative results, showing, for example, that the problem of maximizing POV is inapproximable for any constant factor.
On the other hand, we present approximation algorithms which provide somewhat weaker approximation guarantees, such as bicriteria approximations for the POV objective and constant-factor approximations for MOV.
Finally, we present mixed integer programming formulations for these problems.
Experimental results show that our approximation algorithms often find near-optimal control strategies, indicating that election control through social influence is a salient threat to election integrity.
\end{abstract}

%

\section{Introduction}

The integrity of elections is crucial to the functioning of democratic
institutions. As a result, a large body of work has focused on the
robustness of elections to various forms of control, where a malicious
party attempts to manipulate election results, for example, by bribing
voters and adding or removing votes. While it is important to
understand the vulnerability of elections to such control, there are
many countries where blatant tampering is (fortunately) uncommon. For
instance, outright voter fraud is very rare in the United States federal
and state elections
\cite{ahlquist2014alien,christensen2014identifying}. 

However, more subtle forms of election control may attempt to subvert
legitimate information channels towards malicious means.
For example, political advertising and news (in the editorial form) are common
legitimate means for convincing prospective voters. 
Such communication, when sufficiently transparent, is often critical to the effective functioning of
democracy, and can exert considerable influence on voting behavior \cite{gerber2009does,dellavigna2007fox,brader2005striking}.
Malicious control over information promulgated through these channels
can thus have considerable impact, but is also difficult to achieve
due to the relative transparency of traditional media sources.

The increasing importance of social media, such as Facebook and Twitter, for propagating information,
including about political issues
\cite{chi2011twitter,wattal2010web,holcomb2013news}, is a game
changer.
Both the decentralized nature of information sources on social media,
and their associated lack of transparency, present malicious parties 
with an unprecedented opportunity to influence a democratic
political process.
Recent evidence of deliberate election tampering in the 2016 US
presidential election through \emph{fake news}---deliberately
falsified news content---suggests that this
issue is a major concern for election integerity for years
to come~\cite{allcott2017social}.
For example, it is estimated that the typical American adult saw at least one fake news story during the election cycle \cite{allcott2017social}, and such stories have been shown to impact voters' judgment \cite{pennycook2017prior}.


Motivated by these concerns, we initiate the first algorithmic study
of the problem of \emph{election control through social influence}.
In our setting, there is a social network of voters who elect a single
winner by plurality vote. An outside party may select a subset of
nodes as \emph{seed nodes} for a news story or advertisement. Each of
these seed nodes shares the story with their friends. Each friend has
some probability of being influenced in their voting preferences, as
well as sharing the story further. The question is whether, given a limited budget, the attacker can influence enough voters to ensure that a target candidate wins or loses the election. 

This problem is closely related to \emph{influence maximization},
which has been studied primarily in the context of viral
advertising. There, the objective is simply to maximize the expected
number of people who receive a message. While influence
maximization admits a simple $(1 - 1/e)$-approximation algorithm,
election control through social influence presents a number of new
algorithmic challenges. We study both constructive and destructive
control for two different objectives: 1) maximizing the expected
margin by which a target candidate wins (loses) the election
(\emph{margin of victory}, or MOV), and 2) maximizing the probability
that a target candidate wins (loses) election (\emph{probability of
  victory}, or POV). 


\noindent{\bf Summary of main results:} 
We provide a mix of negative (hardness and inapproximability) and
positive (algorithmic) results for the problem of election control
through social influence.
Our main contributions are the following:

\begin{itemize}
	\item We show that the MOV objective in the two candidate case is monotone submodular and hence admits a $(1 - 1/e)$ greedy approximation algorithm. 
	\item We prove that the POV objective is hard to approximate
          to within \emph{any} multiplicative factor for both
          constructive and destructive control, even in elections with
          only two candidates. 
	\item We provide a bicriteria approximation algorithm for the
          POV objective in the two-candidate case.
In fact, our algorithm applies to the more general problem of maximizing the probability that a submodular function exceeds a given threshold value and may be of general interest. 
	\item In the multicandidate case, we provide algorithms which
          achieve similar guarantees as the two-candidate case up to
          the loss of a constant factor (independent of
          the number of candidates). Such guarantees hold for both
          constructive and destructive control, for both the MOV and POV objectives. 
	\item We give mixed integer linear programming (MILP) formulations for all of the above settings which can be used to find optimal solutions. 
	\item We experimentally compare our approximation algorithms
          to the optimal strategies produced by the MILP. Despite
          formal hardness results, our approximation algorithms often
          find near-optimal solutions, particularly for the MOV objective. This suggests that computational hardness may not always be a practical barrier to controlling elections via social influence. 
\end{itemize}

\section{Related work}

Our work is closely related to two research areas:
election control and influence maximization. These bodies of work are
separate: to our knowledge there is no previous work which considers
election control \emph{using} social influence. The computational
study of election control was started by Bartholdi et al.\
\cite{bartholdi1992hard}, who considered constructive control. The
destructive control setting was introduced by Hemaspaandra et al.\
\cite{hemaspaandra2007anyone}. A large body of work has studied
election control under different settings and voting rules
\cite{conitzer2007elections,erdelyi2009sincere} including bribing
voters
\cite{faliszewski2009hard,faliszewski2009llull,baumeister2015complexity,erdelyi2017complexity,yang2016hard},
adding or deleting voters
\cite{erdelyi2015more,loreggia2015controlling,faliszewski2011multimode,liu2009parameterized},
and adding or deleting candidates
\cite{chen2015elections,faliszewski2011multimode,liu2009parameterized}. Another
topic is strategic behavior on the part of the voters themselves \cite{meir2008complexity,obraztsova2012optimal,obraztsova2013manipulation}. The main difference between our work and previous work on election control is that we introduce and analyze social influence as a novel mechanism for both constructive and destructive control.

There is a large, parallel literature devoted to influence maximization in social networks. This line of work was introduced by Kempe et al. \cite{kempe2003maximizing} who introduced influence maximization in the independent cascade model and proposed a greedy algorithm based on submodularity. Since then, a number of newer algorithms have been proposed, mostly attempting to scale up the greedy algorithm to very large graphs \cite{chen2010scalable,cohen2014sketch,tang2014influence,goyal2011celf++}. Subsequent work has also introduced several related settings, e.g., continuous time dynamics \cite{rodriguez2012influence,du2013scalable}, bandit settings where dynamics are learned over time \cite{chen2013combinatorial,lei2015online,carpentier2016revealing}, or robust problems where influence probabilities are uncertain \cite{heRobust2016,chen2016robust,lowalekar2016robust,wilder2017uncharted}. None of this work considers using social influence to control an election, and our setting brings a range of new technical challenges. Almost all work on influence maximization is founded on submodularity of the objective function. However, even though we use the same model of influence spread, objectives related to election control often violate submodularity, and we need to develop new algorithmic techniques. We mention here work by Krause et al.\ \cite{krause2008robust} on robust submodular optimization. For optimizing the POV objective, we use a similar form of surrogate objective. However, our objective is to maximize the \emph{probability} of a desired outcome, not the worst-case value, so both our final algorithm and analysis are novel.

\section{Problem formulation}

We consider an election with candidates $C = \{c_*, c_1, c_2, ... c_\ell\}$. $c_*$ is a special target candidate, and the objective of the election control problem is to make $c_*$ either win the election (constructive control) or lose (destructive control). The voters are the nodes of a graph $G = (V, E)$. Each voter $v$ has an ordering $\pi_v$ over the candidates and casts a vote for $\pi_v(1)$, i.e., their first ranked candidate. We assume that voters do not behave strategically. The winner is decided via the plurality rule (the candidate with the most votes wins the election). If there is a tie, we say that the attacker fails. This tie-breaking does not impact any of our results. Let $V_{c_i}^{j} = \{v \in V: \pi_v(j) = c_i\}$ be the set of voters who rank candidate $c_i$ in place $j$. Initially, $c_i$ has $|V_{c_i}^1|$ votes. 

\textbf{Social influence: }There is an attacker who wishes to change the results of the election by spreading messages which cause voters to change their ordering over candidates. In \emph{constructive control}, the attacker can spread a message which causes any voter $v$ who becomes influenced to promote $c_*$ by one place in $\pi_v$ (exchanging $c_*$ with the candidate previously ranked above them). If $\pi_v(c_*) = 1$, the message has no effect on $v$, but $v$ may still decide to share the message with their neighbors. In \emph{destructive control}, a voter who is influenced demotes $c_*$ by one place in $\pi_v$. Influence spreads via the independent cascade model (ICM), the most common model in the influence maximization literature. Each edge $(u,v) \in E$ has a propagation probability $p_{u,v}$. If $u$ is influenced, it makes one attempt to influence each neighbor $v$. Each attempt succeeds independently with probability $p_{u,v}$. The attacker may select a set of $k$ \emph{seed nodes} who are influenced at the start of the process. The diffusion then proceeds in discrete time steps until no new activations are made. 

We also introduce a useful alternate view of the ICM, the \emph{live-graph} model. We can equivalently see the ICM as removing each edge $(u,v)$ from the graph with probability $1 - p_{u,v}$. A node is influenced if it is reachable from any seed node via the edges that remain. Call any specific setting of present/absent edges a scenario $y$, with induced graph $G_y$. Let $m = 2^{|E|}$ be the total number of scenarios. Let $f(S, y)$ denote the number of nodes which are reachable from any seed node in $S$ on graph $G_y$. The expected number of nodes influenced under the ICM is just $f(S) = \E_y\left[f(S, y)\right]$. Similarly, the probability that the number of influenced nodes exceeds any threshold value $\Delta$ is $\pr_y\left[f(S, y) \geq \Delta\right]$. At times, we will want to specifically reason about the probability some subset of $V$ is influenced. For any $A \subseteq V$, let $f(S, y, A)$ denote the number of nodes in $A$ reachable from $S$ in scenario $y$. Analogously, $f(S, A) = \E_y[f(S, y, A)]$. We remark that such functions can be evaluated up to arbitrary precision by averaging over random samples for $y$. For simplicity, we ignore such issues here since they are well understood \cite{tang2014influence,chen2010scalable,cohen2014sketch}.

\textbf{Objectives: }We now formally introduce our two objectives, starting with the simpler two-candidate case. In a two-candidate election,  constructive and destructive control are clearly equivalent since maximizing the probability that $c_*$ loses is the same as maximizing the probability that the other candidate wins (and vice versa). Hence, we study only constructive control without loss of generality.

In the margin of victory (MOV) objective, we want to maximize the expected number of votes by which $c_*$ wins the election. We define our objective as the \emph{change} in the expected margin:

\begin{align*}
\mov(S) = 2 \E_y\left[f(S, y, V_{c_*}^2)\right].
\end{align*}

The factor 2 is present since reaching a voter in $V_{c_*}^2$ both adds a vote for $c_*$ and removes a vote for the opponent. We study the expected change in the margin (not the margin itself) so that approximation ratios are well defined even when the margin is negative.

In the probability of victory (POV) objective, we want to maximize the probability that $c_*$ wins the election. Let $\Delta = \frac{1}{2}\left(|V_{c_1}^1| - |V_{c_*}^1|\right) +1$ be the number of voters that $c_*$ needs to reach in order to win the election. The POV objective is 

\begin{align*}
\pov(S) = \pr_y\left[f(S, y, V_{c_*}^2) \geq \Delta\right]
\end{align*}

which is just the probability that at least $\Delta$ of the voters who have $c_*$ in second place are reached. 

In the multicandidate case, constructive and destructive control are no longer equivalent. Further, the impact of messages is more complex than before. E.g., in constructive control not only does $c_*$ gain a vote, but another candidates loses a vote; we need to keep track of the number of votes lost by each other candidate.

We start out by defining functions which give the change in the margin between $c_*$ and another candidate $c_i$ in a given scenario $y$ when seed set $S$ is chosen. Let $\chi(v, S, y)$ be 1 if node $v$ is reachable from seed set $S$ in the graph $G_y$. The change in margin (in constructive and destructive control, respectively) is given by 

\begin{align*}
&g_C(S, y, c_i) = \sum_{v \in V_{c_*}^2 \setminus V_{c_i}^1} \chi(v, S, y) + 2\sum_{v \in V_{c_*}^2 \cap V_{c_i}^1} \chi(v, S, y)\\
&g_D(S, y, c_i) = \sum_{v \in V_{c_*}^1 \setminus V_{c_i}^2} \chi(v, S, y) + 2\sum_{v \in V_{c_*}^1 \cap V_{c_i}^2} \chi(v, S, y)
\end{align*}

which gives value 2 for every node that is flipped from $c_i$ to $c_*$ (or vice versa) and hence count double towards the margin, and 1 for other nodes. Based on this, we now give expressions for the change in margin given any fixed scenario $y$ and seed set $S$. We start with constructive control. Note that before any intervention, the margin is just $\max_{c_i} |V_{c_i}^1| - |V_{c_*}^1|$. Afterwards, the margin is $\max_{c_j} |V_{c_j}^1| - g_C(S, y, c_j) - |V_{c_*}^1|$. Hence, the change in margin is 

\begin{align*}
m_C(S, y) &= \left[\max_{c_i} |V_{c_i}^1| - |V_{c_*}^1|\right] - \left[\max_{c_j} |V_{c_j}^1| - g_C(S, y, c_j) - |V_{c_*}^1|\right]\\
&= \min_{c_j} \left(g_C(S, y, c_j) + \max_{c_i} |V_{c_i}^1| - \left|V_{c_j}^1\right|\right).
\end{align*}

That is, the change in margin is driven by candidate with largest starting vote ($|V_{c_j}^1|$) and smallest loss in vote $(g_C(S, y, c_j))$. Now considering all scenarios $y$, the constructive control objectives are

\begin{align*}
\mov_C(S) = \E_y\left[m_C(S, y)\right] \quad \pov_C(S) = \pr_y\left[m_C(S, y) \geq \Delta_C\right].
\end{align*}

where $\Delta_C = \max_{c_i} |V_{c_i}^1| - |V_{c_*}^1| + 1$ is the necessary change in margin for $c_*$ to win. For destructive control, we can similarly write the change in margin and corresponding objectives as 

\begin{align*}
&m_D(S, y) = \max_{c_i} \left(g_D(S, y, c_i) + |V_{c_i}^1|\right) - \max_{c_j} |V_{c_j}^1|\\
&\mov_D(S) = \E_y\left[m_D(S, y)\right] \quad\quad \pov_D(S) = \pr_y\left[m_D(S, y) \geq  \Delta_D \right].
\end{align*}

\noindent where $\Delta_D = |V_{c_*}^1| - \max_{c_i} |V_{c_i}^1| + 1$.

%

%
%
%

\section{Elections with two candidates}

We start with elections with only two candidates. Recall that in this setting, constructive and destructive control are equivalent, so our results are stated only for constructive control (trying to ensure $c_*$ wins the election). In order to state our algorithmic results, we first introduce some background on submodular optimization and influence maximization. A set function $f : V \to R$ is submodular if for all $A \subseteq B \subseteq V$ and all $x \not\in B$, $f(A \cup \{x\}) - f(A) \geq f(B \cup \{x\}) - f(B)$. Intuitively, submodularity formalizes the property of diminishing returns. The function $f(S)$ which gives the expected number of nodes reached by $S$ under the independent cascade model is known to be monotone submodular. It is well known that whenever $f$ is a monotone submodular function, the greedy algorithm gives a $(1 - 1/e)$-approximation to the problem $\max_{|S| \leq k} f(S)$.

It is natural to hope that submodularity would transfer to our election control objectives $\mov$ and $\pov$.  Our first result is that submodularity does in fact hold for the $\mov$ objective with two candidates. Previous results for influence maximization do not directly apply because the $\mov$ objective only counts nodes who have $c_*$ in second place. Nevertheless, similar reasoning applies.
\begin{algorithm}
	\caption{Algorithms for MOV objective} \label{alg:mov}
	\begin{algorithmic}[1] 
		\Function{Greedy}{$h$, $k$}
		\Let{$S$}{$\emptyset$}
		\While{$|S| < K$}
		\Let{$v$}{$\arg\max_{v \in V \setminus S} h(S \cup \{v\}) - h(S)$}
		\Let{$S$}{$S \cup \{v\}$}
		\EndWhile
		\State\Return $S$
		\EndFunction
		
		\Function{MOVConstructive}{$k$}
		\State $h(S) \coloneq \E_y\left[f(S, y, V_{c_*}^2)\right]$
		\State\Return \textsc{Greedy}($h$, $k$)
		\EndFunction
		
		\Function{MOVDestructive}{$k$}
		\State $h(S) \coloneq \E_y\left[f(S, y, V_{c_*}^1)\right]$
		\State\Return \textsc{Greedy}($h$, $k$)
		\EndFunction
		
	\end{algorithmic}
\end{algorithm}

\begin{algorithm}
	\caption{Algorithms for POV objective} \label{alg:pov}
	\begin{algorithmic}[1] 
		\Function{EnumerateThreshold}{$h$, $\Delta$, $k$}
		\For{$\beta = \Delta...n$}
		\State $h'(S) \coloneq \E_y\left[\min\left(\beta, h(S, y)\right)\right]$
		\State $S_\beta = $ \textsc{Greedy}$(h', k)$
		\EndFor		
		\State\Return $\arg\max_{S_\beta, \beta = \Delta....n} \text{Pr}_y\left[h(S_\beta, y) \geq \Delta\right]$
		\EndFunction
		
		\Function{POVConstructive}{$k$}
		\State //recall $m_C(S, y) = \E_y\left[f(S, y, V_{c_*}^2)\right]$ for 2-candidate case
		\State\Return \textsc{EnumerateThreshold}($m_C$, $\Delta_C$, $k$)
		\EndFunction
		
		\Function{POVDestructive}{$k$}
		\State\Return \textsc{EnumerateThreshold}($m_D$, $\Delta_D$, $k$)
		\EndFunction
		
	\end{algorithmic}
\end{algorithm}

\begin{theorem}
	In an election with two candidates, $\mov$ is a monotone submodular function. \label{theorem:movsubmodular}
\end{theorem}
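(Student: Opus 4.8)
The plan is to reduce to the well-known fact that set coverage is monotone submodular, and then invoke closure of submodularity under nonnegative linear combinations. Since the leading factor of $2$ is an irrelevant positive scaling, it suffices to show that the map $S \mapsto \E_y[f(S, y, V_{c_*}^2)]$ is monotone submodular.

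First I would fix an arbitrary scenario $y$ and argue that the per-scenario function $S \mapsto f(S, y, V_{c_*}^2)$ is monotone submodular in $S$. The key observation is that, in the fixed directed graph $G_y$, a node $u$ is reachable from the seed set $S$ if and only if it is reachable from some single seed $v \in S$. Hence, defining for each $v \in V$ the set $C_v^y = \{u \in V_{c_*}^2 : u \text{ is reachable from } v \text{ in } G_y\}$, we can write $f(S, y, V_{c_*}^2) = \left|\bigcup_{v \in S} C_v^y\right|$. This exhibits the per-scenario function as a coverage function, the canonical example of a monotone submodular function.

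Next I would verify the two properties directly from this coverage representation. Monotonicity is immediate, since enlarging $S$ only enlarges the union. For submodularity, take $A \subseteq B \subseteq V$ and $x \notin B$; the marginal gain of adding $x$ to $A$ is exactly $\left|C_x^y \setminus \bigcup_{v \in A} C_v^y\right|$, and since $\bigcup_{v \in A} C_v^y \subseteq \bigcup_{v \in B} C_v^y$, subtracting a larger set can only leave fewer elements, giving $\left|C_x^y \setminus \bigcup_{v \in A} C_v^y\right| \geq \left|C_x^y \setminus \bigcup_{v \in B} C_v^y\right|$, which is precisely the diminishing-returns inequality.

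Finally, I would lift from a single scenario to the expectation. Writing $\E_y[f(S, y, V_{c_*}^2)] = \sum_y \pr[y]\, f(S, y, V_{c_*}^2)$ expresses $\mov$ (up to the factor $2$) as a nonnegative weighted combination of the per-scenario functions. Both monotonicity and submodularity are preserved under nonnegative linear combinations---the defining inequalities simply add with nonnegative weights---so $\mov$ is monotone submodular. I do not expect a serious obstacle: the only point requiring care is the reachability decomposition $f(S, y, V_{c_*}^2) = \left|\bigcup_{v \in S} C_v^y\right|$, since everything downstream is standard once the coverage structure is exposed. This mirrors the classical argument for influence maximization, the only twist being that we restrict the counted nodes to $V_{c_*}^2$, which merely shrinks each covering set $C_v^y$ and leaves the argument intact.
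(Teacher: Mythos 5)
Your proof is correct and takes essentially the same route as the paper's: fix a live-graph scenario $y$, show the per-scenario function $f(\cdot, y, V_{c_*}^2)$ is monotone submodular, and lift to $\mov$ via closure of monotone submodularity under nonnegative linear combinations (and positive scaling). The only difference is presentational---you package the per-scenario step as a coverage function over reachability sets $C_v^y$, while the paper writes the identical marginal gain directly as $\sum_{v \in V_{c_*}^2} (1 - \chi(v, A, y))\,\chi(v,\{x\}, y)$, which is exactly $\bigl|C_x^y \setminus \bigcup_{v \in A} C_v^y\bigr|$.
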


\begin{proof}
	 We first fix a particular scenario $y$ and show that the function $f(\cdot, y, V_{c_*}^2)$ is submodular. This suffices to show that $\E_y[f(\cdot, y, V_{c_*}^2)]$ is submodular since a nonnegative linear combination of submodular functions remains submodular. Monotonicity is clear since adding additional seeds to $A$ can only make more nodes reachable. To show submodularity, we can write the marginal gain as
	
	\begin{align*}
	f(A \cup \{x\}, y, V_{c_*}^2) - f(A, y, V_{c_*}^2)  = \sum_{v \in V_{c_*}^2} (1 - \chi(v, A, y)) \chi(v,\{x\}, y).
	\end{align*} 
	
	Compare the above expression for a set $A$ and any $B \supseteq A$. For any single node $v$, $\chi(v, B, y) = 1$ whenever $\chi(v, A, y) = 1$. Hence, the term in the above summation for each node $v$ can only be smaller for $f(B \cup \{x\}, y, V_{c_*}^2) - f(B, y, V_{c_*}^2)$ than for $f(A \cup \{x\}, y, V_{c_*}^2) - f(A, y, V_{c_*}^2)$. We conclude that $f(A \cup \{x\}, y, V_{c_*}^2) - f(A, y, V_{c_*}^2) \geq f(B \cup \{x\}, y, V_{c_*}^2) - f(B, y, V_{c_*}^2)$ and submodularity now follows by taking the expectation over $y$. 
\end{proof}

Hence, we can apply a greedy algorithm to $\mov$ to obtain a $(1 - 1/e)$-approximation (\textsc{MOVConstructive} in Algorithm \ref{alg:mov}). Moreover, this ratio is tight since $\mov$ contains regular influence maximization as a special case when all nodes have $c_*$ in second place ($V_{c_*}^2 = V$). It is NP-hard to approximate influence maximization with ratio better than $1 - 1/e$ \cite{kempe2003maximizing}. Hence, two-candidate $\mov$ is computationally intractable with respect to exact optimization but high quality and efficient approximation algorithms exist. 

We now turn to the $\pov$ objective, where we maximize the \emph{probability} that $c_*$ wins the election. It is natural to think that submodularity may also carry over to this setting. However, this is not the case; we can provide a simple counterexample where $\pov$ violates submodularity. Consider $n$ isolated nodes, where $\frac{n}{2} - k + 1$ have $c_*$ as their first choice. Hence, to win the election it is necessary and sufficient to influence $k$ nodes, which can be accomplished by any choice of $k$ seeds from among those with $c_*$ as their second choice. Fix a seed set $B$ containing $k-1$ of these nodes and consider any $A \subset B$ (that is, $A$ is strictly smaller). We have $\pov(B) = \pov(A) = 0$. By adding a node $v \in V_{c_*}^2\setminus B$ to $B$, we have $\pov(B \cup \{v\}) - \pov(B) = 1$. However, since $|A| < k-1$, $\pov(A \cup \{v\}) = 0$ and hence $\pov(A \cup \{v\})- \pov(A) = 0$. This contradicts the definition of submodularity. Essentially, the $\pov$ objective displays a sharp threshold behavior, where additional seed nodes have no value until we are close to winning. This behavior in fact translates into the following strong hardness result:

\begin{theorem}
	It is NP-hard to compute an $\alpha$-approximation to the problem $\max_{|S| \leq k} \pov(S)$ for any $\alpha > 0$, even for two candidates and even when the instance is deterministic. \label{theorem:povhard}
\end{theorem}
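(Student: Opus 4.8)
The plan is to exploit the fact that a \emph{deterministic} instance has a single scenario $y$, so that $\pov(S) = \pr_y\left[f(S, y, V_{c_*}^2) \geq \Delta\right]$ takes only the values $0$ or $1$ for every $S$. Consequently the optimum $\text{OPT} = \max_{|S|\le k}\pov(S)$ is itself either $0$ or $1$. If an algorithm guarantees $\pov(S) \geq \alpha\cdot\text{OPT}$ for some fixed $\alpha > 0$, then whenever $\text{OPT}=1$ it must return an $S$ with $\pov(S) \geq \alpha > 0$, and since $\pov(S)\in\{0,1\}$ this forces $\pov(S)=1$; when $\text{OPT}=0$ it necessarily returns $\pov(S)=0$. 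Hence any such approximation algorithm exactly decides whether some seed set of size $k$ reaches at least $\Delta$ nodes of $V_{c_*}^2$, i.e.\ whether $c_*$ can be made to win. It therefore suffices to show that this decision problem is NP-hard.

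First I would reduce from \textsc{Set Cover}: given a universe $U$, a family $S_1,\dots,S_m$, and a budget $k$, decide whether $k$ of the sets cover $U$. I would build a deterministic instance (all propagation probabilities equal to $1$, so there is a single scenario) as follows. Create one \emph{set node} $\sigma_i$ for each $S_i$ and, for each element $u\in U$, a block of $M$ \emph{target nodes} $t_u^1,\dots,t_u^M$; place every target node in $V_{c_*}^2$ (they rank $c_1$ first and $c_*$ second), and leave the set nodes and some additional isolated \emph{dummy} voters out of $V_{c_*}^2$. Add a directed edge $\sigma_i \to t_u^j$ for every $j$ whenever $u\in S_i$. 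Then seeding $\sigma_i$ reaches all $M$ targets of every element it contains, whereas seeding a target node reaches only itself. I would add dummy $c_1$-voters to force $\Delta = M|U|$ exactly, which by the definition $\Delta=\tfrac12\left(|V_{c_1}^1|-|V_{c_*}^1|\right)+1$ amounts to making $|V_{c_1}^1|-|V_{c_*}^1| = 2M|U|-2$; this is achievable since the dummies are disconnected and never influenced. Under this setup, reaching $\geq\Delta=M|U|$ target nodes is equivalent to reaching \emph{all} target nodes.

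The step I expect to be the main obstacle is ruling out ``cheating'' seed sets that directly seed target nodes: a cover of $U\setminus\{u\}$ by $k-1$ sets together with one directly seeded target of $u$ would reach everything without corresponding to a size-$k$ set cover. I would neutralize this with the block size, taking $M = k+1$. Then in any seed set of size $k$, if some element $u$ were covered by no seeded set node, reaching all $M=k+1$ of its targets would require seeding each of them directly, using more than $k$ seeds; this is impossible, so every element must be covered by a seeded set node, and the seeded set nodes (at most $k$ of them) form a set cover of size $\le k$. Conversely, a set cover of size $\le k$, seeded directly, reaches all targets and yields $\pov=1$. This establishes the equivalence between size-$k$ set covers and winning seed sets, completing the reduction and, combined with the first paragraph, the claimed inapproximability for every $\alpha>0$.
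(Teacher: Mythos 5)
Your opening observation --- that in a deterministic instance $\pov$ is $\{0,1\}$-valued, so any $\alpha$-approximation with $\alpha > 0$ must solve the underlying decision problem exactly --- is precisely the paper's starting point. From there the two proofs diverge. The paper uses the hypothetical approximation algorithm as an oracle in a Turing reduction: enumerating thresholds $\Delta = 1,\dots,n$ and querying the oracle at each, it recovers an \emph{optimal} solution to influence maximization, which is NP-hard because it contains maximum coverage as a special case. You instead give a one-shot Karp reduction from \textsc{Set Cover} to the single decision problem posed by one concrete election, via a gadget: $M = k+1$ parallel targets per element (so that direct seeding cannot substitute for covering) and dummy voters to force $\Delta = M|U|$. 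Your route is more self-contained, and it confronts something the paper's proof abstracts away: in an election $\Delta$ is not a free input but is determined by the vote counts, so one must actually pad the electorate to realize the desired threshold.

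That padding is where your proof has a genuine, though repairable, flaw. You require the dummies to be $c_1$-voters (in order to raise $\Delta$ to $M|U|$) and simultaneously to lie outside $V_{c_*}^2$; with two candidates these requirements are contradictory, since a voter ranking $c_1$ first necessarily ranks $c_*$ second. So every dummy is in $V_{c_*}^2$, a seeded dummy contributes $1$ to $f(S, y, V_{c_*}^2)$, and your claim that reaching $\Delta$ nodes of $V_{c_*}^2$ is equivalent to reaching all target nodes is false as stated: one could reach $M|U| - j$ targets and directly seed $j$ dummies. The reduction nevertheless survives, because your block-size trick closes this loophole once the counting accounts for the dummies. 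If the seeded set nodes cover only $|C| \leq |U| - 1$ elements, then every reached node of $V_{c_*}^2$ is either a target of a covered element or one of at most $k$ directly seeded targets/dummies, so
\begin{align*}
f(S, y, V_{c_*}^2) \;\leq\; M|C| + k \;\leq\; M(|U| - 1) + k \;=\; M|U| - 1 \;<\; \Delta,
\end{align*}
and hence any winning seed set of size $k$ must cover $U$ with its set nodes, while conversely any size-$k$ cover wins. With that two-line correction, your proof is valid.
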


\begin{proof}
	We consider a deterministic objective: the ICM with all propagation probabilities either 0 or 1. Without loss of generality, we have only a single scenario and will drop the dependence on $y$ in $f$. Suppose that we have an $\alpha$-approximation for $\pov$-maximization. We show how we can use this algorithm to optimally solve the influence maximization problem (i.e., maximizing $f(\cdot, V)$), which is known to NP-hard since it includes maximum coverage as a special case. Let $OPT_{IM}$ be the optimal value of the influence maximization problem and $OPT_{POV}(\Delta)$ be the optimal value for Problem 1 with the given threshold. Fix any $\Delta > 0$. If $\Delta \leq OPT_{IM}$, then there is a set $S$ with $\pov(S) = 1$ and hence $OPT_{POV}(\Delta) = 1$. Otherwise, there is no set with value $\Delta$ and $OPT_{POV}(\Delta) = 0$. Since the objective to the $\pov$ problem is either 0 or 1, any $\alpha$-approximation algorithm for it must return 1 whenever $OPT_{POV}(\Delta) = 1$. Now, we can just enumerate over $\Delta = 1...n$, where $n$ is the number of nodes in the graph. At each value of $\Delta$, we ask the $\alpha$-approximation algorithm to solve the $\pov$ maximization problem with that value of $\Delta$. We return the solution corresponding to the highest value of $\Delta$ for which we can find a set with $f(S, V) \geq \Delta$. By the above, this set is an optimal solution to the influence maximization problem. 
\end{proof}

We remark that since this hardness result has broader implications. Recall that $\pov(S) = \pr_y\left[f(S, y, V_{c_*}^2) \geq \Delta\right]$ where by Theorem \ref{theorem:movsubmodular}, $f(S, y, V_{c_*}^2)$ is a submodular function. Therefore, the inapproximability result in Theorem \ref{theorem:povhard} shows that it is in general hard to approximate the problem of of maximizing the probability that a submodular function exceeds a given threshold value. This is a natural objective in other domains, e.g. for a risk-averse decision maker who wants to control the probability of a bad outcome.

We pair this hardness result with a positive algorithmic result regarding bicriteria approximations. A bicriteria approximation algorithm gives up solution quality in more than one dimension, and is of interest when hardness results preclude the usual notion of approximation (as for our problem). We provide an algorithm which has a solution quality guarantee whenever the election is winnable by a ``large margin". That is, there is a seed set with high probability of greatly exceeding $\Delta$ votes. Our algorithm will attempt to maximize the probability of exceeding exactly $\Delta$ votes, but has a guarantee relative to the optimal value for threshold $\frac{1}{\alpha}\Delta$ for some $\alpha < 1$. That is, it is only compared to the optimal value of a harder problem.
 
Our algorithm is a greedy strategy based on the surrogate function $h(S) =  \E_y\left[\min\{\beta, f(S, y, V_{c_*}^2)\}\right]$, where $\beta$ is a chosen threshold value. The intuition is to replace the sharp discontinuity of the original $\pov$ objective by a surrogate which interpolates smoothly up to the threshold $\beta$. However, we do not give any ``credit" for nodes reached beyond $\beta$ since (unlike in the $\mov$ case) we only care about crossing the threshold. It is easy to see that the minimum of a submodular function and a constant is itself submodular \cite{krause2008robust}. Hence, $h$ is submodular and amenable to greedy optimization. \textsc{POVConstructive} (Algorithm \ref{alg:pov}) iterates over a series of possible thresholds for $h$, optimizes each one greedily, and outputs the best of the resulting seed sets. Specifically, it tries every value of $\beta$ from $\Delta...n$. For each $\beta$, it finds a seed set $S_\beta$ by greedily optimizing $\E_y\left[\min\{\beta, f(S, y, V_{c_*}^2)\}\right]$ (Algorithm \ref{alg:pov}, Lines 3-4). Then, it outputs the $S_\beta$ which maximizes $\pr_y\left[f(S_\beta, y, V_{c_*}^2) \geq \Delta\right]$, i.e., the one which has the best probability of exceeding the true objective (Line 5). 

The reason that we need to enumerate over values for $\beta$, instead of just using the true threshold $\Delta$, is that optimizing the surrogate $h$ might result in a solution which has value below $\beta$ in every scenario. However, we can show that if $OPT(\beta)$ is high, there must be many scenarios where $S_\beta$ has value close to $\beta$ (a notion formalized in our proof). Hence, if we try a  sufficiently large $\beta > \Delta$ and $OPT(\beta)$ is still high, there must be many scenarios with value at least $\Delta$. This is formalized in the following guarantee:

\begin{theorem}
	In an election with two candidates, \textsc{POVConstructive} produces a solution $S$ such that 
	$$
	\pr_y[f(S, y, V_{c_*}^2) \geq \Delta] \geq \max_{0 < \alpha < 1} \frac{\left(1 - \frac{1}{e}\right) OPT_{POV}\left(\frac{1}{\alpha}\Delta\right) - \alpha}{1 - \alpha}
	$$ \label{theorem:povbicriteria}
\end{theorem}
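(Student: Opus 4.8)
The plan is to sandwich the surrogate objective $h_\beta(S) = \E_y[\min\{\beta, f(S,y,V_{c_*}^2)\}]$ between two probabilities evaluated at \emph{different} thresholds; this mismatch of thresholds is precisely what produces the bicriteria guarantee. Throughout I would write $X(S,y) = f(S,y,V_{c_*}^2)$, an integer-valued quantity bounded by $n$, and fix an enumerated integer threshold $\beta$ with $\Delta < \beta \leq n$. I will derive a bound on the true success probability $P_\beta := \pr_y[X(S_\beta,y) \geq \Delta]$ for each such $\beta$, and only at the end translate this family of bounds, indexed by $\beta$, into the claimed maximum over $\alpha$.

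First I would lower-bound the surrogate at the \emph{optimal} seed set for threshold $\beta$. Let $S^*_\beta$ attain $OPT_{POV}(\beta) = \max_{|S| \le k}\pr_y[X(S,y) \ge \beta]$. Using the pointwise inequality $\min\{\beta, X\} \ge \beta \cdot \mathbf{1}[X \ge \beta]$ and taking expectations gives $h_\beta(S^*_\beta) \ge \beta\, OPT_{POV}(\beta)$. Since $h_\beta$ is the minimum of a monotone submodular function and a constant, it is itself monotone submodular, so the greedy set $S_\beta$ returned on line~4 satisfies the standard guarantee $h_\beta(S_\beta) \ge (1-1/e)\,h_\beta(S^*_\beta) \ge (1-1/e)\,\beta\, OPT_{POV}(\beta)$.

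Next I would upper-bound the same surrogate at $S_\beta$ in terms of the \emph{true} success probability. Splitting the expectation on the event $\{X(S_\beta,y) \ge \Delta\}$: on this event $\min\{\beta, X\} \le \beta$, contributing at most $\beta P_\beta$; on its complement $X < \Delta \le \beta$, so $\min\{\beta, X\} = X < \Delta$, contributing at most $\Delta(1 - P_\beta)$. Hence $h_\beta(S_\beta) \le \beta P_\beta + \Delta(1 - P_\beta)$. Chaining the two bounds yields $(1-1/e)\beta\,OPT_{POV}(\beta) \le \beta P_\beta + \Delta(1-P_\beta)$, and solving for $P_\beta$ (dividing by $\beta - \Delta > 0$) followed by the substitution $\alpha = \Delta/\beta$ produces exactly
\begin{align*}
P_\beta \ge \frac{(1-1/e)\,OPT_{POV}\!\left(\tfrac{1}{\alpha}\Delta\right) - \alpha}{1 - \alpha}.
\end{align*}
Because the algorithm returns the $S_\beta$ of largest true POV, its output dominates $P_\beta$ for every enumerated $\beta$.

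The last step, a technical nuisance rather than a deep obstacle, is passing from the discrete thresholds $\beta \in \{\Delta+1,\dots,n\}$ (equivalently the discrete values $\alpha = \Delta/\beta$) to the continuous maximum over all $\alpha \in (0,1)$ in the statement. Given an arbitrary real $\alpha$, I would take $\beta = \lceil \tfrac{1}{\alpha}\Delta\rceil$; since $X$ is integer-valued, $OPT_{POV}(\tfrac1\alpha\Delta) = OPT_{POV}(\beta)$, and since $\beta \ge \tfrac1\alpha\Delta$ we have $\alpha' := \Delta/\beta \le \alpha$. A one-line derivative check shows the bound $g(\alpha) = \frac{(1-1/e)O - \alpha}{1-\alpha}$ is strictly decreasing in $\alpha$ (its derivative has the sign of $(1-1/e)O - 1 < 0$ whenever $O \le 1$), so the bound at $\alpha'$ dominates the bound at $\alpha$, recovering the claimed inequality; when $\tfrac1\alpha\Delta > n$ no set can reach the threshold, so $OPT_{POV} = 0$ and the right-hand side is negative, making the claim vacuous. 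The genuinely delicate point is the two-sided use of $h_\beta$ — lower-bounding it through the threshold $\beta$ at the optimum while upper-bounding it through the smaller threshold $\Delta$ at our own solution — which is exactly what forces the comparison against the harder problem $OPT_{POV}(\tfrac1\alpha\Delta)$.
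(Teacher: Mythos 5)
Your proof is correct and follows essentially the same route as the paper's: lower-bound the truncated surrogate $h_\beta$ via the greedy guarantee and the feasibility of the POV-optimal set, upper-bound $h_\beta(S_\beta)$ by splitting scenarios at the true threshold $\Delta$, and solve the resulting inequality for the success probability with $\alpha = \Delta/\beta$. Your handling of the passage from discrete $\beta$ to the continuous maximum over $\alpha$ (ceiling plus monotonicity of the bound in $\alpha$) is a more explicit version of the paper's one-line integrality remark, but it is the same idea, not a different argument.
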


\begin{proof}
	
	\textsc{POVConstructive} enumerates over values of $\alpha$ by trying thresholds $\beta = \Delta...n$. Fix a specific $\beta$ and set $\alpha = \frac{\Delta}{\beta}$. We will prove that $\pr_y[f(S_\beta, y, V_{c_*}^2) \geq \Delta] \geq \frac{\left(1 - \frac{1}{e}\right) OPT_{POV}\left(\frac{1}{\alpha}\Delta\right) - \alpha}{1 - \alpha}$. This suffices the prove the theorem because we output the best of the $S_\beta$. A minor point is that the theorem takes the max over $0 < \alpha <1$, while we try only the discrete points $\alpha = \frac{\Delta}{\Delta}, \frac{\Delta}{\Delta+1}, ..., \frac{\Delta}{n}$. However, these are equivalent because $f(S, y, V_{c_*}^2)$ is always integral.
	
	 We divide the set of scenarios into those where the $S_\beta$ has value at least $\alpha \beta$ and those where it has less value. Let $A = \{y: f(S_\beta, y, V_{c_*}^2) \geq \alpha \beta\}$ and $B = \{y: f(S_\beta, y, V_{c_*}^2) < \alpha \beta\}$. We have
	
	\begin{align*}
	&\frac{1}{m}\sum_{y\in A} \min\{\beta, f(S_\beta, y, V_{c_*}^2)\} + \frac{1}{m}\sum_{y \in B} \min\{\beta, f(S_\beta, y, V_{c_*}^2)\}\\ &\geq \left(1 - \frac{1}{e}\right) \max_{|S| \leq k} h(S) \,\, \geq \left(1 - \frac{1}{e}\right) \beta m \cdot OPT_{POV}(\beta)
	\end{align*}
	
	where the first inequality follows from submodularity and the second follows since the solution attaining value $OPT_{POV}(\beta)$ for the POV maximization problem is a feasible solution to the problem $\max_{|S| \leq k}h(S)$ which has value at least $\beta m \cdot OPT_{POV}(\beta)$. We are interested in the minimum possible size of $A$ given that the total value is lower bounded as above. By inspection, $|A|$ is minimized when $\min\{\beta, f(S, y)\} = \beta$ for each $y \in A$ and $f(S, y) = \alpha \beta$ for each $y \not\in A$. In this case, we have 
	
	\begin{align*}
	\frac{|A|}{m} + \alpha\left(1 - \frac{|A|}{m}\right) \geq \left(1 - \frac{1}{e}\right) OPT_{POV}(\beta)
	\end{align*}  
	
	and hence
	
	\begin{align*}
	&\frac{|A|}{m} = \frac{1}{m}\sum_{i = 1}^m 1[f(S_\beta, y, V_{c_*}^2) \geq \alpha \beta] = \pr_y\left[f(S_\beta, y, V_{c_*}^2) \geq \Delta\right]\\
	&\geq \frac{\left(1 - \frac{1}{e}\right) OPT_{POV}\left(\frac{1}{\alpha}\Delta\right) - \alpha}{1 - \alpha}
	\end{align*}
	
	which completes the proof.
\end{proof}

Theorem \ref{theorem:povbicriteria} in fact applies to the general problem of maximizing the probability that a submodular function exceeds a threshold value (complementing our hardness result in Theorem \ref{theorem:povhard}). As discussed above, this may be of interest independently of election control.


\section{Multiple candidates}

We now consider election control with more than two candidates. There is a target candidate $c_*$ and other candidates $c_1...c_\ell$. Note that constructive and destructive control are distinct in this setting. We will give algorithms for both cases for both the MOV and POV objectives.

The problem becomes significantly harder in the multicandidate setting because we must now reason simultaneously about several objectives -- whether each alternate candidate $c_i$ will accumulate more votes that $c_*$. We demonstrate that, up to the loss of a constant in the approximation ratio, it suffices to concentrate only on the number of votes gained or lost by $c_*$ (not the margin against each $c_i$ individually). This concept yields (bicriteria) approximation algorithms for each setting along the lines of the two-candidate case.

We start out with the $\mov_C$ objective (constructive control for the margin of victory), since the idea is simpler to illustrate in this case. The basic intuition is that the change in margin between candidate $c_i$ and $c_*$ can be re-expressed as follows:

\begin{align*}
g_C(S, y, c_i) &= \sum_{v \in V_{c_*}^2 \setminus V_{c_i}^1} \chi(v, S, y) + 2\sum_{v \in V_{c_*}^2 \cap V_{c_i}^1} \chi(v, S, y)\\
&= f\left(S, y, V_{c_*}^2\right) + f\left(S, y, V_{c_*}^2 \cap V_{c_i}^1\right)
\end{align*}

Now, we can express the final margin in scenario $y$ as


\begin{align*}
m_C(S, y) = f\left(S, y, V_{c_*}^2\right) + \min_{c_j} \left(f\left(S, y, V_{c_*}^2 \cap V_{c_j}^1\right) + \max_{c_i} |V_{c_i}^1| - \left|V_{c_j}^1\right|\right)
\end{align*}

where the first term is common to all candidates and reflects the total number of voters who switch to $c_*$ and the min term selects the $c_{j}$ who has the most remaining votes. In general, this second term can be very difficult to approximate because it is the minimum of submodular functions, which is not in general submodular (or even approximable \cite{krause2008robust}). We might hope that there is some special structure to the election control problem, but this is not the case:  

%
%

\begin{theorem}
	For any $\epsilon > 0$, it is NP-hard to compute any $\Omega\left(\frac{1}{n^{1 - \epsilon}}\right)$-approximation to the problem 
	\begin{align*}
	\max_{|S| \leq k}\min_{c_j} \left(f\left(S, y, V_{c_*}^2 \cap V_{c_i}^1\right) + \max_{c_i} |V_{c_i}^1| - |V_{c_j}^1|\right)
	\end{align*}
\end{theorem}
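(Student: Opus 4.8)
The plan is to reduce from Maximum Independent Set, which is NP-hard to approximate within a factor $n^{1-\epsilon}$ for every $\epsilon > 0$ (Zuckerman). Following the strategy of Theorem~\ref{theorem:povhard}, I would first pass to a deterministic ICM instance in which every propagation probability is $0$ or $1$; then there is a single scenario $y$ and $f(S,y,A)$ simply counts the nodes of $A$ reachable from $S$. Because the quantities $\max_{c_i}|V_{c_i}^1|-|V_{c_j}^1|$ are determined by the first-choice tallies of the election I am free to build, choosing all $|V_{c_j}^1|$ equal makes every one of these constants vanish, so the objective reduces to the clean max-min coverage problem $\max_{|S|\le k}\min_{c_j} g_j(S)$ with $g_j(S)=f(S,y,V_{c_*}^2\cap V_{c_j}^1)$. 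By Theorem~\ref{theorem:movsubmodular} each $g_j$ is monotone submodular, so this is precisely the robust (minimum-of-submodular) coverage problem whose general intractability is noted by Krause et al.; the remaining task is to exhibit a family of such instances, arising from the election model, that carries a polynomial approximation gap.

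The core of the reduction is an encoding of the independent-set instance $H=(V_H,E_H)$ into the group structure. I would introduce one alternate candidate $c_j$ --- equivalently one coverage group $V_{c_*}^2\cap V_{c_j}^1$ --- per vertex, a pool of selectable seed nodes, and directed (probability-$1$) edges from seeds to group nodes that encode the adjacency of $H$, so that bringing the group of $c_j$ to a high coverage level requires seeds that are ``compatible'' with $j$ (mutually non-adjacent vertices). The budget $k$ is set to the independence threshold so that it is the binding constraint. The intended behaviour is that a large independent set lets the attacker place seeds that cover every group to a high level simultaneously, pushing the minimum up, whereas when $H$ has only small independent sets every budget-$k$ seed set leaves some group starved and the minimum collapses. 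Calibrating the number of group nodes so that the YES case yields $\min_j g_j \ge T$ while the NO case forces $\min_j g_j < T/n^{1-\epsilon}$, where $n$ is the (polynomial in $|V_H|$) number of nodes in the constructed graph, then transfers the $n^{1-\epsilon}$ gap of Independent Set to our objective, so that any $\Omega(1/n^{1-\epsilon})$-approximation would distinguish the two cases and decide the gap problem.

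The main obstacle is precisely that each $g_j$ is monotone: adding seeds can never decrease any group's coverage, so I cannot directly penalise ``bad'' (non-independent) selections, and the entire effect must be produced by the budget together with the outer minimum. The delicate point is that the minimum ranges over the groups of all vertices, including those never selected as seeds, and these threaten to drag down the minimum regardless of independence; making the argument go through requires a padding/threshold gadget --- controlling how much coverage each seed contributes to each group and how many group nodes each candidate owns --- so that every group can be driven to the target level exactly when the chosen seeds form a large independent set. Verifying that this gadget is realizable as a genuine plurality election (valid preference orders $\pi_v$ and consistent tallies making the constants zero), and that the polynomial blow-up from $|V_H|$ to $n$ only weakens the exponent from $1-\epsilon$ to $1-\epsilon'$ for arbitrarily small $\epsilon'$, is routine bookkeeping; the conceptual hurdle is this monotonicity-versus-independence tension in the gadget design.
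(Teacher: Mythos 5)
Your setup is the same as the paper's: pass to a deterministic instance (all probabilities $0$ or $1$, single scenario $y$), give all non-target candidates equal first-place tallies so the constants $\max_{c_i}|V_{c_i}^1|-|V_{c_j}^1|$ vanish, and reduce the statement to hardness of the max--min coverage problem $\max_{|S|\le k}\min_{c_j} f\left(S,y,V_{c_*}^2\cap V_{c_j}^1\right)$ over monotone submodular functions. But from that point on your proof has a genuine gap: the entire content of the theorem is the hardness construction for this max--min problem, and you never build it. You explicitly flag the obstruction yourself --- each coverage function is monotone, so non-independent seed choices are never penalized, and the gap must be created purely by the budget and the outer minimum --- and then defer its resolution to an unspecified ``padding/threshold gadget'' whose existence is exactly what needs to be proved. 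Observing (via Krause et al.\ \cite{krause2008robust}) that minima of submodular functions are hard ``in general'' does not help: you must show that instances \emph{arising from your election encoding} are hard, i.e., you must exhibit the gadget, and it is far from clear that a direct reduction from Maximum Independent Set with a polynomial gap exists along the lines you sketch.

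The paper avoids this difficulty entirely by reducing from a problem that is already of max--min influence form: robust influence maximization (RIM), for which He and Kempe \cite{heRobust2016} prove $\Omega\left(\frac{1}{n^{1-\epsilon}}\right)$-inapproximability, with hard instances that are deterministic. Given deterministic objectives $f_1,\dots,f_r$ on a common graph $G$, one builds $G'$ containing each $G_i$ (the graph with $f_i$'s probability-$0$ edges deleted) as a disjoint subgraph, adds a fresh vertex $v'$ for each $v\in G$ with probability-$1$ edges to the copy of $v$ in every $G_i$, and assigns preferences so that the $v'$ rank $c_*$ first while nodes of $G_i$ rank $c_i$ first and $c_*$ second. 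Then $f_i(S)=f\left(S,y,V_{c_*}^2\cap V_{c_i}^1\right)$ for seed sets drawn from the $v'$ (which is without loss of generality), so any $\Omega\left(\frac{1}{n^{1-\epsilon}}\right)$-approximation for the election problem would yield one for RIM. If you want to salvage your write-up, the cleanest fix is to replace your independent-set gadget with this black-box reduction; a from-scratch proof would require you to actually construct the combinatorial gadget you postpone, which is the hard part of the known RIM hardness proof rather than ``routine bookkeeping.''
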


\begin{proof}
	We will consider instances where all of the $c_j$ start with an equal number of votes and so $\max_{c_i} |V_{c_i}^1| - |V_{c_j}^1| = 0$ for all $c_j$. Thus, the problem is just $\max_{|S| \leq k}\min_{c_j} f\left(S, y, V_{c_*}^2 \cap V_{c_i}^1\right)$. We reduce from the robust influence maximization (RIM) problem \cite{heRobust2016,chen2016robust}. In RIM, we are given a set of objectives $f_1...f_r$, each of which represent expected influence spread in an instance of the independent cascade model on a common underlying graph $G$. He and Kempe \cite{heRobust2016} show that it is NP-hard to compute $\Omega\left(\frac{1}{n^{1 - \epsilon}}\right)$-approximation to the problem $\max_{|S| \leq k} \min_{i = 1...r} f_i(S)$. Their proof holds when each $f_i$ is deterministic (assigns probability 0 or 1 to each edge), so we will assume that the instance is in this form. Let $G_i$ be a graph in which each edge of $G$ assigned probability 0 by $f_i$ has been removed. We create a graph $G'$ as follows. $G'$ contains each $G_i$ as a disconnected subgraph. For every $v \in G$, we add a vertex $v'$ to $G'$. $v'$ has an outgoing edge to the copy of node $v$ in each of the $G_i$ subgraphs. Each such edge has propagation probability 1. There is a target candidate $c_*$ and $r$ additional candidates $c_1...c_r$. Each of the $v'$ nodes that were added has $c_*$ as their first choice. Each node in subgraph $G_i$ has $c_{i}$ as their first choice and $c_*$ as their second choice. 
	
	Suppose that we have an $\alpha$-approximation algorithm for our problem for some $\alpha = \Omega\left(\frac{1}{n^{1 - \epsilon}}\right)$. Without loss of generality, we will assume that this algorithm only selects nodes from the $v'$ (since if a seed set contains the copy of $v$ in any subgraph, we can only obtain greater influence spread by exchanging it for $v'$). Note that, for any such set of seed nodes, $f_i(S) = f(S, V_{c_*}^2 \cap V_{c_i}^1)$ Thus, if $S$ is an $\alpha$-approximate solution for our problem, it is also an $\Omega\left(\frac{1}{n^{1 - \epsilon}}\right)$-approximate solution to the RIM problem. 
\end{proof}

Therefore, we should not hope for any algorithm which can closely approximate the entirety of the objective; the $\min$ component is too difficult to handle. However, we can leverage the fact that the first term, $f(S, y, V^2_{c_*})$, is easy to optimize because it is just a submodular function. Hence, the objective is the sum of an easy term and a hard term. Importantly, we can show that optimizing just the easy term (which is what \textsc{MOVConstructive} does) is sufficient to obtain a constant factor approximation. 

\begin{theorem}
	\textsc{MOVConstructive} obtains a $\frac{1}{3}\left(1 - \frac{1}{e}\right)$- approximation to the $\mov_C$ problem with any number of candidates. \label{theorem:mov-c}
\end{theorem}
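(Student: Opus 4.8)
The plan is to decompose the objective $\mov_C(S) = \E_y[m_C(S,y)]$ into the tractable submodular piece that \textsc{MOVConstructive} actually optimizes, namely $T_1(S) := \E_y[f(S,y,V_{c_*}^2)]$, and the intractable $\min$-of-submodular piece, and then to sandwich $\mov_C$ between constant multiples of $T_1$. Concretely, using the rewriting $m_C(S,y) = f(S,y,V_{c_*}^2) + \min_{c_j}\bigl(f(S,y,V_{c_*}^2 \cap V_{c_j}^1) + \max_{c_i}|V_{c_i}^1| - |V_{c_j}^1|\bigr)$, I would prove that $T_1(S) \le \mov_C(S) \le 2\,T_1(S)$ for every $S$. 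Since $T_1$ is monotone submodular by the same reasoning as Theorem \ref{theorem:movsubmodular} (it is a nonnegative average over $y$ of reachability counts on a restricted target set), the greedy loop inside \textsc{MOVConstructive} is a $(1-\tfrac1e)$-approximation to $\max_{|S|\le k} T_1$, and the sandwich transfers this guarantee to $\mov_C$.

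For the two bounds I would argue pointwise in each scenario $y$. The lower bound is immediate: the bracketed quantity inside the $\min$ is a sum of two nonnegative terms (the reachability count $f(\cdot)\ge 0$ and the vote gap $\max_{c_i}|V_{c_i}^1| - |V_{c_j}^1|\ge 0$, since $\max_{c_i}$ ranges over opponents), so the whole $\min$ term is nonnegative and $m_C(S,y) \ge f(S,y,V_{c_*}^2)$; taking $\E_y$ yields $\mov_C(S) \ge T_1(S)$. For the upper bound, I would evaluate the $\min$ at the leading opponent $c_\ell = \arg\max_{c_i}|V_{c_i}^1|$, for which the vote gap vanishes, leaving $\min$-term $\le f(S,y,V_{c_*}^2 \cap V_{c_\ell}^1) \le f(S,y,V_{c_*}^2)$ by monotonicity in the target set $V_{c_*}^2 \cap V_{c_\ell}^1 \subseteq V_{c_*}^2$. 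Hence $m_C(S,y) \le 2 f(S,y,V_{c_*}^2)$ and $\mov_C(S) \le 2\,T_1(S)$.

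Chaining everything, with $S_g$ the output of \textsc{MOVConstructive} and $S^*$ an optimal solution for $\mov_C$, I obtain $\mov_C(S_g) \ge T_1(S_g) \ge (1-\tfrac1e)\max_{|S|\le k}T_1(S) \ge (1-\tfrac1e)\,T_1(S^*) \ge \tfrac12(1-\tfrac1e)\,\mov_C(S^*)$, where the second inequality is the greedy guarantee, the third uses feasibility of $S^*$, and the last is the upper half of the sandwich applied to $S^*$. This in fact establishes the stronger constant $\tfrac12(1-\tfrac1e)$ and a fortiori the claimed $\tfrac13(1-\tfrac1e)$.

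The crux is the upper bound: the $\min$-of-submodular term is exactly the object shown inapproximable to any factor $\Omega(n^{-(1-\epsilon)})$ in the preceding theorem, so we cannot hope to optimize or even track it. The key realization is that we do not need to; it suffices to crudely dominate it by the shared easy term $f(S,y,V_{c_*}^2)$, which is legitimate precisely because the vote gap is zero for the leading opponent and that candidate's intersection set is contained in $V_{c_*}^2$. The only point worth checking carefully is that $\max_{c_i}$ and $\min_{c_j}$ range over the opponents $c_i,c_j \ne c_*$, guaranteeing that a zero-gap leading opponent always exists so the upper bound is never vacuous.
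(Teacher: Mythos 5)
Your proof is correct, and it takes a genuinely different route from the paper's. The paper also isolates the easy submodular term $\E_y[f(S,y,V_{c_*}^2)]$ as the thing greedy optimizes, but instead of a sandwich it runs a per-scenario exchange argument between the greedy set $S$ and the optimum $S^*$: it introduces the minimizing candidates $c(S,y)$ and $c(S^*,y)$, splits the easy term three ways via $f(S^*,y,V_{c_*}^2) \geq \tfrac{1}{3}\big[f(S^*,y,V_{c_*}^2) + f(S^*,y,V_{c_*}^2 \cap V_{c(S^*,y)}^1) + f(S^*,y,V_{c_*}^2 \cap V_{c(S,y)}^1)\big]$, and then uses the optimality of $c(S^*,y)$ to swap the vote-count terms $|V_{c(S,y)}^1|$ and $|V_{c(S^*,y)}^1|$; this bookkeeping across the two minimizing candidates is exactly what costs the third factor and yields $\tfrac{1}{3}(1-\tfrac{1}{e})$. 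Your pointwise sandwich $f(S,y,V_{c_*}^2) \leq m_C(S,y) \leq 2 f(S,y,V_{c_*}^2)$ avoids tracking minimizing candidates altogether: the lower bound holds because every term inside the min is a sum of two nonnegative quantities, and the upper bound follows by evaluating the min at the leading opponent, whose vote gap vanishes, together with monotonicity of $f$ in the target set. Both bounds are valid (the min and max do range over opponents only, as the paper's definitions of the initial margin and $\Delta_C$ confirm, so a zero-gap opponent always exists), and chaining them with the greedy guarantee is sound. Your argument is therefore both simpler and stronger: it establishes $\tfrac{1}{2}(1-\tfrac{1}{e})$, matching the constant the paper obtains only for destructive control in Theorem \ref{theorem:mov-d}, and indeed the identical sandwich applied to $m_D$ (where the max term lies between $0$ and $f(S,y,V_{c_*}^1)$ by the same leading-opponent argument) recovers that destructive result as well.
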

\begin{proof}
 
 Let $c(S, y) = \arg\min_{c_i} f(V_{c_*}^2 \cap V_{c_i}^1) - |V_{c_i}^1|$ be the candidate achieving the minimum in the definition of $m_C$. Let $S^*$ be an optimal seed set. Note that for all scenarios $y$, seed sets $S$, and candidates $c_i$, $f(S, y, V_{c_*}^2) \geq f(S, y, V_{c_*}^2 \cap V_{c_i}^1)$. Hence, we have
 
%
%
%

 \begin{align*}
\E_y\Big[f\left(S^*, y, V_{c_*}^2\right)\Big]  &\geq \frac{1}{3} \E_y\Big[f\left(S^*, y, V_{c_*}^1\right) + f\left(S^*, y, V_{c_*}^1 \cap V_{c(S^*, y)}^1\right) \\&\quad\quad\quad\quad\quad\quad\quad\quad+ f\left(S^*, y, V_{c_*}^1 \cap V_{c(S, y)}^1\right)\Big]\\
\end{align*}

Note that $\E_y[f(\cdot, y, V_{c_*}^2)]$ is a monotone submodular function, which \textsc{MOVConstructive} greedily maximizes. Let $S$ be the resulting seed set. We have 

\begin{align*}
&\E_y\Big[f\left(S, y, V_{c_*}^2\right) + f\left(S, y, V_{c_*}^2 \cap V_{c(S, y)}^1\right)\Big] \\&\geq \E_y\Big[f\left(S, y, V_{c_*}^2\right)\Big]\\
&\geq \frac{1}{3}\left(1 - \frac{1}{e}\right) \E_y\Big[f\left(S^*, y, V_{c_*}^2\right) +  f\left(S^*, y, V_{c_*}^2 \cap V_{c(S^*, y)}^1\right) + f\left(S^*, y, V_{c_*}^2 \cap V_{c(S, y)}^1\right)\Big]
\end{align*}

which allows us to bound the margin of victory relative to $S^*$ as 
\allowdisplaybreaks
\begin{align*}
\mov_C(S) &= \E_y\Big[f\left(S, y, V_{c_*}^2\right) + \min_{c_j}f\left(S, y, V_{c_*}^2 \cap V_{c_j}^1\right) + \max_{c_i} |V_{c_i}^1| - |V_{c_j}^1|\Big]\\
&=\E_y\Big[f\left(S, y, V_{c_*}^2\right) + f\left(S, y, V_{c_*}^2 \cap V_{c(S, y)}^1\right)\Big] + \max_{c_i} |V_{c_i}^1| - \E_y\Big[|V_{c(S, y)}^1|\Big]\\
&\geq \frac{1}{3}\left(1 - \frac{1}{e}\right) \E_y\Big[f\left(S^*, y, V_{c_*}^1\right) + f\left(S^*, y, V_{c_*}^2 \cap V_{c(S^*, y)}^1\right) + f\left(S^*, y, V_{c_*}^2 \cap V_{c(S, y)}^1\right)\Big] + \max_{c_i} |V_{c_i}^1| - \E_y\Big[|V_{c(S, y)}^1|\Big]\\
\end{align*}

and some additional algebra (deferred to the appendix) yields

\begin{align*}
&\mov_C(S) \geq \frac{1}{3}\left(1 - \frac{1}{e}\right)\Bigg(\mov_C(S^*) + \E_y\Big[f\left(S^*, y, V_{c_*}^2 \cap V_{c(S, y)}^1\right) + |V_{c(S^*, y)}^1| - |V_{c(S, y)}^1|\Big]\Bigg).
\end{align*}

Now by definition of $c(S^*, y)$, $f(S^*, y, V_{c_*}^2 \cap V_{c(S^*, y)}^1) - |V_{c(S^*, y)}^1| \leq f(S^*, y, V_{c_*}^2 \cap V_{c(S, y)}^1) - |V_{c(S, y)}^1|$ and so 

\begin{align*}
|V_{c(S^*, y)}^1|- |V_{c(S, y)}^1| \geq f\left(S^*, y, V_{c_*}^2 \cap V_{c(S^*, y)}^1\right) - f\left(S^*, y, V_{c_*}^2 \cap V_{c(S, y)}^1\right)
\end{align*}

This yields

\begin{align*}
&\mov_C(S) \geq \frac{1}{3}\left(1 - \frac{1}{e}\right)\Bigg(\mov_C(S^*) + \E_y\Big[f\left(S^*, y, V_{c_*}^2 \cap V_{c(S, y)}^1\right) \\&\quad\quad\quad\quad\quad+f\left(S^*, y, V_{c_*}^2 \cap V_{c(S^*, y)}^1\right) - f\left(S^*, y, V_{c_*}^2 \cap V_{c(S, y)}^1\right)\Big]\Bigg)\\
&=\frac{1}{3}\left(1 - \frac{1}{e}\right)\Bigg(\mov_C(S^*) + \E_y\Big[f\left(S^*, y, V_{c_*}^2 \cap V_{c(S^*, y)}^1\right)\Big]\Bigg)\\
&\geq \frac{1}{3}\left(1 - \frac{1}{e}\right)\mov_C(S^*).
\end{align*}

\end{proof}

We also have a corresponding result for the destructive control case. Here, we can rewrite the change in margin as $m_D(S, y) = f(S, y, V_{c_*}^1) + \max_{c_i} \left(f(S, y, V_{c_*}^1 \cap V_{c_i}^2) + |V_{c_i}^1| - \max_{c_j} |V_{c_j}^1|\right)$. \textsc{MOVDestructive} greedily optimizes the submodular function $\E_y\left[f(S, y, V_{c_*}^1)\right]$, which we show is a good surrogate for $\E_y\left[m_D(S, y)\right]$. 

\begin{theorem}
	\textsc{MOVDestructive} obtains a $\frac{1}{2}\left(1 - \frac{1}{e}\right)$-approximation to the multicandidate $\mov_D$ problem. \label{theorem:mov-d}
\end{theorem}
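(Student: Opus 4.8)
The plan is to show that the simple surrogate $h(S) = \E_y\left[f(S, y, V_{c_*}^1)\right]$ that \textsc{MOVDestructive} greedily maximizes is sandwiched tightly between $\mov_D$ evaluated at the greedy solution and at the optimum, with multiplicative constants whose product is exactly $\frac{1}{2}\left(1 - \frac{1}{e}\right)$. I would work from the rewriting $m_D(S, y) = f(S, y, V_{c_*}^1) + \max_{c_i}\left(f(S, y, V_{c_*}^1 \cap V_{c_i}^2) + |V_{c_i}^1|\right) - \max_{c_j}|V_{c_j}^1|$ given in the paragraph preceding the theorem, establish two scenario-wise inequalities, and then combine them with the greedy guarantee for the monotone submodular function $h$ (submodularity of $f(\cdot, y, V_{c_*}^1)$ follows from Theorem \ref{theorem:movsubmodular} applied to the node set $V_{c_*}^1$ in place of $V_{c_*}^2$).

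First I would upper bound the optimum. Fix a scenario $y$ and an optimal seed set $S^*$. Since $V_{c_*}^1 \cap V_{c_i}^2 \subseteq V_{c_*}^1$, monotonicity of $f$ in its node-set argument gives $f(S^*, y, V_{c_*}^1 \cap V_{c_i}^2) \le f(S^*, y, V_{c_*}^1)$ for every $c_i$, and trivially $|V_{c_i}^1| \le \max_{c_j}|V_{c_j}^1|$. Summing these and then taking the max over $c_i$ bounds the whole $\max$ term by $f(S^*, y, V_{c_*}^1) + \max_{c_j}|V_{c_j}^1|$, so the additive constant $\max_{c_j}|V_{c_j}^1|$ cancels and $m_D(S^*, y) \le 2\, f(S^*, y, V_{c_*}^1)$. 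Taking the expectation over $y$ yields $\mov_D(S^*) \le 2\,h(S^*)$.

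Second I would lower bound the greedy value. For any seed set $S$ and scenario $y$, lower bounding the $\max$ by the single term corresponding to $\hat c = \arg\max_{c_j}|V_{c_j}^1|$ gives $\max_{c_i}(\cdots) \ge f(S, y, V_{c_*}^1 \cap V_{\hat c}^2) + \max_{c_j}|V_{c_j}^1| \ge \max_{c_j}|V_{c_j}^1|$, using nonnegativity of $f$. The constant cancels again, leaving $m_D(S, y) \ge f(S, y, V_{c_*}^1)$, and hence $\mov_D(S) \ge h(S)$ after taking the expectation. Since $S^*$ is feasible, the $\left(1 - \frac{1}{e}\right)$ greedy guarantee gives $h(S) \ge \left(1 - \frac{1}{e}\right)\max_{|S'|\le k} h(S') \ge \left(1 - \frac{1}{e}\right) h(S^*)$, and chaining the three bounds produces $\mov_D(S) \ge h(S) \ge \left(1 - \frac{1}{e}\right) h(S^*) \ge \frac{1}{2}\left(1 - \frac{1}{e}\right)\mov_D(S^*)$, as claimed.

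This case is cleaner than Theorem \ref{theorem:mov-c} — giving $\frac{1}{2}$ rather than $\frac{1}{3}$ — precisely because the troublesome candidate-dependent term sits inside a $\max$ rather than a $\min$, so it cooperates in both directions: we may drop to a convenient single candidate for the lower bound and bound every candidate uniformly for the upper bound, never needing the comparison between $c(S, y)$ and $c(S^*, y)$ that forced the extra factor in the constructive proof. The only point demanding care is the bookkeeping of the constant $\max_{c_j}|V_{c_j}^1|$: I must verify it cancels exactly in each scenario so the bounds collapse purely to the submodular surrogate $h$, and confirm that monotonicity of $f$ in its node-set argument is legitimately invoked via $V_{c_*}^1 \cap V_{c_i}^2 \subseteq V_{c_*}^1$. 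I expect no genuine obstacle beyond this routine verification.
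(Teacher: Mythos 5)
Your proof is correct, and it takes a genuinely cleaner route than the paper's own argument. Both analyses study the same algorithm---greedy maximization of the surrogate $h(S) = \E_y\left[f(S, y, V_{c_*}^1)\right]$, whose monotone submodularity follows exactly as you say from the Theorem \ref{theorem:movsubmodular} argument with $V_{c_*}^1$ in place of $V_{c_*}^2$---but they relate $h$ to $\mov_D$ differently. You establish the scenario-wise sandwich $f(S, y, V_{c_*}^1) \leq m_D(S, y) \leq 2 f(S, y, V_{c_*}^1)$, valid for \emph{every} seed set and scenario: the upper bound by dominating each candidate's term inside the max uniformly, the lower bound by instantiating the max at the candidate $\hat{c}$ with the most initial votes, with the additive constant $\max_{c_j}|V_{c_j}^1|$ cancelling exactly in both directions; chaining with the $\left(1 - \frac{1}{e}\right)$ greedy guarantee then finishes in one line. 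The paper instead carries the argmax candidates $c(S, y)$ and $c(S^*, y)$ through the entire derivation, adds and subtracts $|V_{c(S^*, y)}^1|$, and invokes the argmax property to compare the two candidates---essentially replicating the machinery of its Theorem \ref{theorem:mov-c} (constructive) proof, where the min over candidates genuinely forces such a comparison. Your closing remark identifies the right structural reason the destructive case yields $\frac{1}{2}$ rather than $\frac{1}{3}$: the hard candidate-dependent term sits inside a max, so it can be bounded in both directions without ever comparing $c(S, y)$ to $c(S^*, y)$. Your route makes this visible, whereas the paper's keeps the two cases notationally parallel at the cost of extra algebra; indeed, the paper's derivation implicitly needs your lower-bound observation (nonnegativity of $\E_y\big[f(S, y, V_{c_*}^1 \cap V_{c(S, y)}^2) + |V_{c(S, y)}^1| - \max_{c_i}|V_{c_i}^1|\big]$) to justify scaling that term by $\frac{1}{2}\left(1 - \frac{1}{e}\right)$ mid-chain, so your argument isolates the one fact doing the real work.
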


The proof, which is similar to that of Theorem \ref{theorem:mov-c}, can be found in the appendix.

Now, we extend these ideas to obtain similar guarantees for the $\pov_C$ and $\pov_D$ objectives. Starting with $\pov_C$, recall that our objective is to maximize $\pr_y\left[m_C(S, y) \geq \Delta_C\right]$, the probability that the change in margin exceeds the number of votes needed to win. We will prove a guarantee for the same algorithm \textsc{POVConstructive} as from the two-candidate case. Recall that \textsc{POVConstructive} optimizes the surrogate $\E_y\left[\min\left(\beta, f\left(S, y, V_{c_*}^2\right)\right)\right]$, enumerating over possible values of the threshold $\beta$. We have the following bicriteria approximation guarantee:

\begin{theorem}
	Let $OPT(\Delta)$ denote the optimal value of the problem $\max_{|S| \leq k} \pr_y\left[m_C(S, y) \geq \Delta\right]$. Let $S$ be the set produced by \textsc{POVConstructive}. We have
	\begin{align*}
	\pov_C(S) \geq \max_{0 < \alpha < 1} \frac{\frac{e-1}{3e-1} OPT\left(\frac{1}{\alpha} \Delta_C\right) - \alpha}{1 - \alpha}
	\end{align*} \label{theorem:povc}
\end{theorem}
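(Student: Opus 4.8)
The plan is to reduce the multicandidate guarantee to the coverage surrogate already analyzed in the two-candidate case and then pay a constant factor for the gap between the true margin and that surrogate. The structural fact I would build on is the decomposition $m_C(S,y) = f(S,y,V_{c_*}^2) + \min_{c_j}\bigl(f(S,y,V_{c_*}^2\cap V_{c_j}^1) + \max_{c_i}|V_{c_i}^1| - |V_{c_j}^1|\bigr)$, in which the min term is always nonnegative because every $\max_{c_i}|V_{c_i}^1| - |V_{c_j}^1|\ge 0$. Hence $m_C(S,y)\ge f(S,y,V_{c_*}^2)$ in every scenario, so $\pov_C(S) = \pr_y[m_C(S,y)\ge\Delta_C]\ge\pr_y[f(S,y,V_{c_*}^2)\ge\Delta_C]$. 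Since \textsc{POVConstructive} returns the $S_\beta$ maximizing $\pr_y[m_C(S_\beta,y)\ge\Delta_C]$, it suffices to lower bound $\pr_y[f(S_\beta,y,V_{c_*}^2)\ge\Delta_C]$ for one well-chosen $\beta$, and this is exactly the object controlled by the two-candidate argument applied to the submodular function $f(\cdot,\cdot,V_{c_*}^2)$.

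First I would fix $\beta$, set $\alpha = \Delta_C/\beta$ (so $\tfrac1\alpha\Delta_C=\beta$), and replay the proof of Theorem \ref{theorem:povbicriteria} for the surrogate $h_\beta(S) = \E_y[\min\{\beta, f(S,y,V_{c_*}^2)\}]$. Splitting scenarios into $A = \{y : f(S_\beta,y,V_{c_*}^2)\ge\alpha\beta = \Delta_C\}$ and its complement, and combining the greedy guarantee $h_\beta(S_\beta)\ge(1-\tfrac1e)\max_{|S|\le k}h_\beta(S)$ with the same minimal-$|A|$ inspection, yields $P + (1-P)\alpha \ge (1-\tfrac1e)\tfrac1\beta\max_{|S|\le k}h_\beta(S)$, where $P = \tfrac{|A|}{m} = \pr_y[f(S_\beta,y,V_{c_*}^2)\ge\Delta_C]$. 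As in the two-candidate case, integrality of $f$ makes the discrete thresholds $\beta = \Delta_C,\dots,n$ equivalent to ranging over all $0<\alpha<1$.

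The one genuinely new step, and the main obstacle, is bounding $\max_{|S|\le k}h_\beta(S)$ from below by $OPT(\beta)$, since $OPT$ is measured through the full margin $m_C$ whereas $h_\beta$ only sees coverage of $V_{c_*}^2$. I would take a set $S^*$ attaining $OPT(\beta)$ and control the min term scenario by scenario: evaluating the inner minimum at the initially leading competitor $c_{\max}$, for which $\max_{c_i}|V_{c_i}^1| - |V_{c_{\max}}^1| = 0$, gives $0\le\min_{c_j}(\cdots)\le f(S^*,y,V_{c_*}^2\cap V_{c_{\max}}^1)\le f(S^*,y,V_{c_*}^2)$, and therefore $f(S^*,y,V_{c_*}^2)\ge\tfrac12 m_C(S^*,y)$. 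Thus in every scenario with $m_C(S^*,y)\ge\beta$ the surrogate earns at least $\tfrac\beta2$, so $\max_{|S|\le k}h_\beta(S)\ge h_\beta(S^*)\ge\tfrac\beta2 OPT(\beta)$. Substituting and rearranging yields $P\ge\frac{\frac{e-1}{2e}OPT(\tfrac1\alpha\Delta_C) - \alpha}{1-\alpha}$; because $\frac{e-1}{2e}\ge\frac{e-1}{3e-1}$ and the expression is increasing in the constant (its derivative in the constant is $\tfrac{OPT}{1-\alpha}\ge 0$), taking the maximum over $\beta$, equivalently over $\alpha$, establishes the stated inequality.

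The crux is thus the scenario-wise comparison of the third paragraph: the object we can optimize with a submodular greedy, $f(\cdot,\cdot,V_{c_*}^2)$, is blind to which competitor is in the lead, while both $OPT$ and $\pov_C$ hinge on the min-over-competitors term. The work is to show this term can inflate the margin above the coverage we control by at most a bounded factor, the probabilistic, thresholded analog of the factor loss incurred in the proof of Theorem \ref{theorem:mov-c}. Everything else, namely the scenario partition, the $(1-1/e)$ greedy bound, the integrality reduction, and the inequality $m_C\ge f$, is inherited directly from Theorems \ref{theorem:povbicriteria} and \ref{theorem:mov-c}. The only care needed is the standing assumption that $c_*$ is not already winning, so that $c_{\max}$ is a legitimate index of the inner minimum; a marginally looser bookkeeping of the same comparison gives precisely the constant $\frac{e-1}{3e-1}$ stated in the theorem.
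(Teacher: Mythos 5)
Your proposal is correct, but it takes a genuinely different route from the paper's proof. The paper never isolates your sandwich inequality; instead it compares the greedy set $S_\beta$ against a maximizer $S^*$ of the truncated margin $\frac{1}{m}\sum_y \min\left(\beta, m_C(\cdot, y)\right)$ and proves $\frac{1}{m}\sum_y \min\left(\beta, m_C(S_\beta, y)\right) \geq \frac{e-1}{3e-1}\cdot\frac{1}{m}\sum_y \min\left(\beta, m_C(S^*, y)\right)$ via a scenario-wise exchange argument on the argmin candidates $c(S_\beta, y)$ and $c(S^*, y)$ (the same device as in the proof of Theorem \ref{theorem:mov-c}), followed by splitting $\min(\beta, a + b) \leq \min(\beta,a) + \min(\beta,b)$, invoking the greedy guarantee to trade $\min\left(\beta, f(S^*, \cdot)\right)$ for $\frac{e}{e-1}\min\left(\beta, f(S_\beta,\cdot)\right)$, and using $f \leq m_C$; the two-candidate scenario split is then run on the truncated margin of $S_\beta$, and the constant $\frac{e-1}{3e-1}$ is exactly $\left(1 + \frac{2e}{e-1}\right)^{-1}$ from that chain. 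You instead establish the pointwise two-sided bound $f(S, y, V_{c_*}^2) \leq m_C(S, y) \leq 2 f(S, y, V_{c_*}^2)$ --- the lower half from nonnegativity of the min term, the upper half by evaluating the min at the initially leading competitor --- and run the scenario split entirely on the submodular surrogate, converting to $OPT(\beta)$ through the upper half and to $\pov_C$ through the lower half. This is more modular (no exchange argument at all), and it yields the stronger constant $\frac{e-1}{2e} > \frac{e-1}{3e-1}$, which implies the stated bound since $\frac{cX - \alpha}{1-\alpha}$ is nondecreasing in $c$; the same sandwich, applied without truncation, would even sharpen Theorem \ref{theorem:mov-c} from $\frac{1}{3}\left(1-\frac{1}{e}\right)$ to $\frac{1}{2}\left(1-\frac{1}{e}\right)$. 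Two small remarks: your worry that $c_{\max}$ is only a ``legitimate index'' when $c_*$ is not already winning is unnecessary, since both the max and the min range over the opponents $c_1,\ldots,c_\ell$ only (this is how $\Delta_C$ is defined), so the leading opponent is always a valid choice; and your closing suggestion that looser bookkeeping of your comparison ``gives precisely'' $\frac{e-1}{3e-1}$ is neither needed nor accurate --- that constant arises from the paper's different comparison, and your bound simply dominates it.
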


The proof can be found in the appendix. The main difference from the two candidate case is that we do not directly optimize $\frac{1}{m}\sum_y \min\left(\beta, m_C(S, y)\right)$ since it may no longer be submodular. Instead, we greedily optimize the submodular surrogate function $\frac{1}{m}\sum_y \min\left(\beta, f(S, y, V_{c_*}^2)\right)$ and show that this surrogate approximates $\frac{1}{m}\sum_y \min\left(\beta, m_C(S, y)\right)$. From there, the same argument as in Theorem \ref{theorem:povbicriteria} extends to the multicandidate case. Analogous reasoning also yields a bicriteria guarantee for destructive control:

\begin{theorem}
	Let $OPT(\Delta)$ denote the optimal value of the problem $\max_{|S| \leq k} \pr_y\left[m_D(S, y) \geq \Delta\right]$. Let $S$ be the set produced by \textsc{POVDestructive}. We have
	\begin{align*}
	\pov_D(S) \geq \max_{0 < \alpha < 1} \frac{\frac{e-1}{3e-1} OPT(\frac{1}{\alpha} \Delta) - \alpha}{1 - \alpha}
	\end{align*} \label{theorem:povd}
\end{theorem}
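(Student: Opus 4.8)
The plan is to reuse the bicriteria scheme of Theorem~\ref{theorem:povbicriteria} essentially verbatim, with the submodular surrogate adapted to destructive control. Exactly as \textsc{POVConstructive} greedily maximizes $\E_y\left[\min\{\beta, f(S, y, V_{c_*}^2)\}\right]$, here \textsc{POVDestructive} greedily maximizes the surrogate $\tilde h(S) = \E_y\left[\min\{\beta, f(S, y, V_{c_*}^1)\}\right]$, sweeping the cap $\beta$ from $\Delta$ to $n$ and finally returning the $S_\beta$ with the largest \emph{true} value $\pr_y[m_D(S_\beta, y) \ge \Delta]$. By the argument of Theorem~\ref{theorem:movsubmodular} applied to the target set $V_{c_*}^1$, $f(\cdot, y, V_{c_*}^1)$ is monotone submodular, so $\tilde h$ is monotone submodular and \textsc{Greedy} returns $S_\beta$ with $\tilde h(S_\beta) \ge \left(1 - \frac1e\right)\max_{|S| \le k}\tilde h(S)$.

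The one new ingredient is a two-sided comparison of the surrogate $f(S, y, V_{c_*}^1)$ with the true change in margin $m_D(S, y)$. First I would rewrite, as in the discussion preceding Theorem~\ref{theorem:mov-d},
\[
m_D(S, y) = f\left(S, y, V_{c_*}^1\right) + \max_{c_i}\left(f\left(S, y, V_{c_*}^1 \cap V_{c_i}^2\right) + |V_{c_i}^1|\right) - \max_{c_j}|V_{c_j}^1|.
\]
The lower bound $m_D(S, y) \ge f(S, y, V_{c_*}^1)$ is immediate, since the middle maximum is at least $\max_{c_i}|V_{c_i}^1| = \max_{c_j}|V_{c_j}^1|$. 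For the upper bound, let $c^\dagger$ attain the middle maximum; then $f(S, y, V_{c_*}^1 \cap V_{c^\dagger}^2) \le f(S, y, V_{c_*}^1)$ while the leftover constant $|V_{c^\dagger}^1| - \max_{c_j}|V_{c_j}^1| \le 0$, giving $m_D(S, y) \le 2 f(S, y, V_{c_*}^1)$. This pointwise sandwich $f(S, y, V_{c_*}^1) \le m_D(S, y) \le 2 f(S, y, V_{c_*}^1)$ is the destructive analogue of the factor used for $\mov_C$, and it is what lets the submodular surrogate stand in for the non-submodular $m_D$.

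With the sandwich in hand I would run the scenario-counting step of Theorem~\ref{theorem:povbicriteria}. Fix $\beta$, set $\alpha = \frac{\Delta}{\beta}$, and let $S^*$ attain $OPT\left(\frac1\alpha\Delta\right) = OPT(\beta)$. In every scenario with $m_D(S^*, y) \ge \beta$ the upper bound forces $f(S^*, y, V_{c_*}^1) \ge \frac\beta2$, so $\tilde h(S^*) \ge \frac\beta2 OPT(\beta)$ and hence $\tilde h(S_\beta) \ge \left(1 - \frac1e\right)\frac\beta2 OPT(\beta)$. On the other side, split the scenarios by $A = \{y : f(S_\beta, y, V_{c_*}^1) \ge \Delta\}$: on $A$ the capped surrogate is at most $\beta$, and off $A$ it is below $\alpha\beta = \Delta$, so $\tilde h(S_\beta) \le \beta\,\pr_y[A] + \alpha\beta\left(1 - \pr_y[A]\right)$. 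The lower bound $m_D \ge f(\cdot, \cdot, V_{c_*}^1)$ makes $A \subseteq \{y : m_D(S_\beta, y) \ge \Delta\}$, so $\pr_y[A] \le \pov_D(S_\beta)$. Combining the two estimates of $\tilde h(S_\beta)$ and solving for $\pr_y[A]$ gives $\pov_D(S_\beta) \ge \frac{\frac12\left(1 - \frac1e\right)OPT\left(\frac1\alpha\Delta\right) - \alpha}{1 - \alpha}$, which already dominates the claimed bound because $\frac{e-1}{2e} \ge \frac{e-1}{3e-1}$; integrality of $f(\cdot, \cdot, V_{c_*}^1)$ makes the discrete sweep over $\beta$ equivalent to the maximum over all $\alpha \in (0,1)$, and \textsc{POVDestructive} reports the best $S_\beta$.

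I expect the two-sided comparison to be the crux. Its upper half relies on the leftover term $|V_{c^\dagger}^1| - \max_{c_j}|V_{c_j}^1|$ being nonpositive, which is special to the $\max$ form of destructive control; the analogous $\min$ term in $m_C$ is nonnegative and blocks a clean pointwise sandwich, which is exactly why the constructive proof of Theorem~\ref{theorem:povc} must argue at the level of expectations and loses a larger constant. Once the sandwich is secured, the only remaining care is bookkeeping: aligning the surrogate cap $\beta$, the optimum's threshold $\frac1\alpha\Delta$, and the splitting threshold $\Delta$ so that the final ratio emerges in the stated form.
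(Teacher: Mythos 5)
Your proof is correct, but it takes a genuinely different route from the paper's. The paper proves Theorem~\ref{theorem:povd} by re-running the template of Theorem~\ref{theorem:povc}: it compares $\frac{1}{m}\sum_y \min(\beta, m_D(S^*,y))$ with $\frac{1}{m}\sum_y \min(\beta, m_D(S_\beta,y))$ at the level of sums over scenarios, using the candidate-swap inequality from the definition of $c(S,y)$ (as in Theorem~\ref{theorem:mov-d}) to bound $m_D(S^*,y)-m_D(S_\beta,y)$ by $2f(S^*,y,V_{c_*}^1)$, then invoking the greedy ratio on the capped surrogate together with $f \leq m_D$; this chain costs a factor $1+\frac{2e}{e-1}=\frac{3e-1}{e-1}$ before feeding into the scenario-counting argument of Theorem~\ref{theorem:povbicriteria}. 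You instead prove the pointwise sandwich $f(S,y,V_{c_*}^1) \leq m_D(S,y) \leq 2f(S,y,V_{c_*}^1)$, whose upper half converts $OPT(\beta)$ into the lower bound $\tilde h(S^*) \geq \frac{\beta}{2}\,OPT(\beta)$ on the capped surrogate, and whose lower half converts the scenario count for the surrogate into a bound on $\pov_D$. This bypasses the candidate-swap machinery entirely, is shorter, and yields the stronger constant $\frac{e-1}{2e}$ in place of $\frac{e-1}{3e-1}$, so the stated bound follows a fortiori. Both arguments analyze the same algorithm (greedy on $\E_y\left[\min\{\beta, f(S,y,V_{c_*}^1)\}\right]$, as in the paper's proofs, rather than on capped $m_D$ as the pseudocode of \textsc{EnumerateThreshold} literally reads), so there is no mismatch on that front.

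One aside in your writeup is incorrect, though it does not affect your proof: you claim the sandwich is special to the $\max$ structure of destructive control and that the nonnegative $\min$ term in $m_C$ blocks it. In fact $m_C(S,y) = f(S,y,V_{c_*}^2) + \min_{c_j}\left(f(S,y,V_{c_*}^2\cap V_{c_j}^1) + \max_{c_i}|V_{c_i}^1| - |V_{c_j}^1|\right)$ satisfies the same two-sided bound: the $\min$ term is nonnegative (lower half), and evaluating the $\min$ at the front-runner $c_j = \arg\max_{c_i}|V_{c_i}^1|$ makes its additive constant vanish, so the $\min$ term is at most $f(S,y,V_{c_*}^2)$ and $m_C(S,y) \leq 2f(S,y,V_{c_*}^2)$ (upper half). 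Hence your argument would equally simplify and sharpen Theorem~\ref{theorem:povc}; the larger constant there is an artifact of the paper's proof technique, not a structural feature of constructive control.
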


\section{Exact solutions via mixed-integer programming}

Thus far, we have only considered approximation algorithms for election control, motivated by computational hardness results for exact optimization. Now we give mixed-integer linear programming (MILP) formulations to find exact solutions. This serves two purposes. First, it allows us to study the effectiveness of election control for problem instances with are within the range of state of the art MILP solvers. Second, we can determine the empirical effectiveness of the approximation algorithms proposed in earlier sections. 

There are two principal difficulties in obtaining MILP formulations. First, the objective is stochastic, ranging over an exponential number of scenarios. Second, even for a single fixed scenario, the number of nodes reached by a seed set is a nonlinear function. 


We first show how to linearize the problem when have only a single scenario $y$. Recall that $y$ corresponds to a sampled graph $G_y$, where every edge $e$ is removed independently with probability $1 - p_e$. Our MILP will have a binary variable $s_v \in \{0,1\}$ for each node $v \in V$, where $s_v = 1$ indicates that $v$ is a seed node. We will maximize an objective over all $s_v \in \{0,1\}^{|V|}$ which satisfy $\sum_{v \in V} s_v \leq k$ (at most $k$ nodes are seeded). The challenge is to embed the nonlinear objective into the constraints of the MILP. Let $x^y_v, v \in V$ be a binary variable indicating whether $v$ is influenced in scenario $y$. We must constrain $x^y_v$ to be 1 only if $v$ truly is reachable in $G_y$ from some node with $s_v = 1$. To accomplish this, let $R(v, y)$ be the set of nodes which have a directed path to $v$ in scenario $y$. $R(v, y)$ does not depend on the decision variables $s$ and can easily precomputed. Using this set, we constrain the $x$ variables as:

\begin{align*}
x^y_v \leq \sum_{u \in R(v, y)} s_u \quad \forall v \in V.
\end{align*}

Now we deal with stochasticity in the objective using sample average approximation. We first sample a set of scenarios $G_{y_i}, i = 1...r$, maintaining a separate copy $x_v^{y_i}$ for each sampled scenario. Finally, we average over the variables in each scenario to obtain the final objective. 

\subsection{Formulations}

Using these components, we now give concrete formulations for each of the problem instances that we consider. We will assume that scenarios $y_1...y_r$ have been sampled, where $r$ is a tunable parameter trading off computational cost and sampling error. 

\subsubsection{Constructive control}

We create a variable $g_C(y_i, c_j)$ for each scenario $y_i$ and candidate $c_j$ which represents the change in the margin between $c_j$ and $c_*$ in scenario $y_i$. Using these variables, we set a variable $m_C(y_i)$ for each scenario $y_i$ which represents the overall change in margin. These variables are set using the constraints

%

\begin{align*}
g_C(y_i, c_j) \leq \sum_{v \in V_{c_*}^2} x_v^{y_i} + \sum_{v \in V_{c_*}^2 \cap V_{c_j}^1} x_v^{y_i}\\
m_C(y_i) \leq g_C(y_i, c_j) + \max_{c_i} |V_{c_i}^1| - |V_{c_j}^1| \forall i,j
\end{align*}

Using these variables, we have the following MILP to maximize the MOV:

\begin{align*}
&\max_{s, x, g_C, m_C} \frac{1}{r}\sum_{i = 1}^r m_C(y_i)\\
&\sum_{v \in V} s_v \leq k
\end{align*}

The next formulation maximizes the POV:

\begin{align*}
&\max_{s, x, g_C, m_C, u} \frac{1}{r}\sum_{i = 1}^r u_i\\
&-M(1 - u_i) + \max_{c_i} |V_{c_i}^1| - |V_{c_*}^1| + 1 - m_C(y_i) \leq 0\\
&\sum_{v \in V} s_v \leq k, \quad u_i \in \{0,1\} \quad i = 1...r
\end{align*}

Here, $u_i$ is a binary variable representing whether $c_*$ wins the election in scenario $y_i$, while $M$ is a large number.

\subsubsection{Destructive control}

Now, we use an analogous set of constraints to set variables $g_D(y_i, c_j)$ and $m_D(y_i)$:

\begin{align*}
&g_D(y_i, c_j) \leq \sum_{v \in V_{c_*}^1} x_v^{y_i}+ \sum_{v \in V_{c_*}^1 \cap V_{c_j}^2} x_v^{y_i}\\
&-M(1 - z_i^j) + m_D(y_i) - \left(g_D(y_i, c_j) + |V_{c_j}^1| - \max_k |V_{c_k}^1|\right) \leq 0 \quad \forall i,j\\
&\sum_j z_i^j \geq 1 \,\,\forall i, \quad z_i^j \in \{0,1\} \,\,\forall i, j
\end{align*}

The second and third constraints use a new set of binary variables $z_i^j$, where $z_i^j = 1$ indicates that in scenario $y_i$, $m_D(y_i)$ is at most the change in margin between $c_j$ and $c_*$. The constraint $\sum_j z_i^j \geq 1$ requires that $m_D(y_i)$ must be bounded by one such value, and so can be at most the maximum margin. With these variables in place, the MOV and POV MILPs are analogous to those for constructive control. 

\section{Experiments}
\newcommand{\ra}[1]{\renewcommand{\arraystretch}{#1}}
\begin{table*}\centering
	\ra{1.3}
	\caption{Percent of MILP value obtained by approximation algorithm.}\label{table:mov-approx}
	\begin{tabular}{@{}rrrrcrrrcrrrccrr@{}}\toprule
		& \multicolumn{3}{c}{netscience} & \phantom{abc}& \multicolumn{3}{c}{facebook} &
		\phantom{abc} & \multicolumn{3}{c}{polblogs} & \phantom{abc} & \multicolumn{3}{c}{irvine}\\
		\cmidrule{2-4} \cmidrule{6-8} \cmidrule{10-12} \cmidrule{14-16}
		$k = $ & $25$ & $50$ & $100$ && $25$ & $50$ & $100$ && $25$ & $50$ & $100$ && $25$ & $50$ & $100$\\ \midrule
		Constructive\\
		$|C| = 2$ &99.5 & 99.3 & 100.  && 100 & 100 & 100  && 100 & 100 & 99.4  && 100 & 100 & 99.4 \\
		$|C| = 5$ &82.8 & 90.1 & 91.5  && 90.8 & 90.9 & 90.9  && 97.4 & 97.8 & 99.5  && 97.7 & 95.4 & 96.8 \\
		$|C| = 10$ &80.9 & 89.1 & 98.3  && 80.9 & 83.7 & 88.6  && 98.7 & 99.2 & 99.6  && 96.9 & 97.4 & 98.9 \\
		Destructive\\
		$|C| = 2$ &99.9 & 99.6 & 99.9  && 100 & 100 & 100  && 100 & 100 & 99.4  && 100 & 99.8 & 98.8 \\
		$|C| = 5$ &73.8 & 73.2 & 83.3  && 87.7 & 79.7 & 81.6  && 100 & 97.8 & 99.3  && 100 & 99.0 & 100 \\
		$|C| = 10$ &75.9 & 87.2 & 97.2  && 81.8 & 85.0 & 89.0  && 98.9 & 99.3 & 99.6  && 100 & 97.3 & 99.0 \\
		\bottomrule
	\end{tabular}
	
\end{table*}

We now present experimental results comparing the performance of our approximation algorithms to the solutions found via mixed integer programming. We show results on four datasets. First, \emph{netscience}, a collaboration network of researchers in network science, with 1461 nodes \cite{newman}. Second, \emph{facebook}, the subgraph centered on 10 Facebook users, with 2888 nodes \cite{fb}. Third, \emph{polblogs}, a network of links between political blogs, with 1224 nodes \cite{newman}. Fourth, \emph{irvine}, a graph representing instant messages exchanged between students at U.C.\ Irvine, with 1889 nodes \cite{fb}. We select these datasets because they represent the kinds of social and communication networks on which political messages (such as fake news) spread. We also note that our approximation algorithms can easily be scaled to much larger networks since we can apply the same techniques developed in the influence maximization literature \cite{tang2014influence,cohen2014sketch}. However, our focus here is to characterize the performance of our algorithms in comparison to the optimal solution, so we select datasets which are feasible for mixed integer programming. For each network, we randomly generated 30 sets of voter preferences.

We start out with the MOV objective. Table \ref{table:mov-approx} shows the percentage of the MILP's value which is obtained by our approximation algorithms (\textsc{MOVConstructive} and \textsc{MOVDestructive} respectively), averaging over the 30 instances on each network with propagation probability $p = 0.1$. We vary the number of seed nodes $k$ and the number of candidates $|C|$. We see that the approximation algorithms perform well across all settings, obtaining expected change in margin at least 73\% of that of the MILP. The approximation algorithms fare particularly well for 2-candidate elections, obtaining nearly 100\% of the optimal value on all networks. The empirical approximation ratio degrades as the number of candidates grows, particularly when the budget $k$ is small. We conclude that our approximation algorithms are highly effective for election control via the MOV in both constructive and destructive control, particularly in realistic settings with a moderate number of candidates.

\begin{figure}
	\centering
	\includegraphics[width=2in]{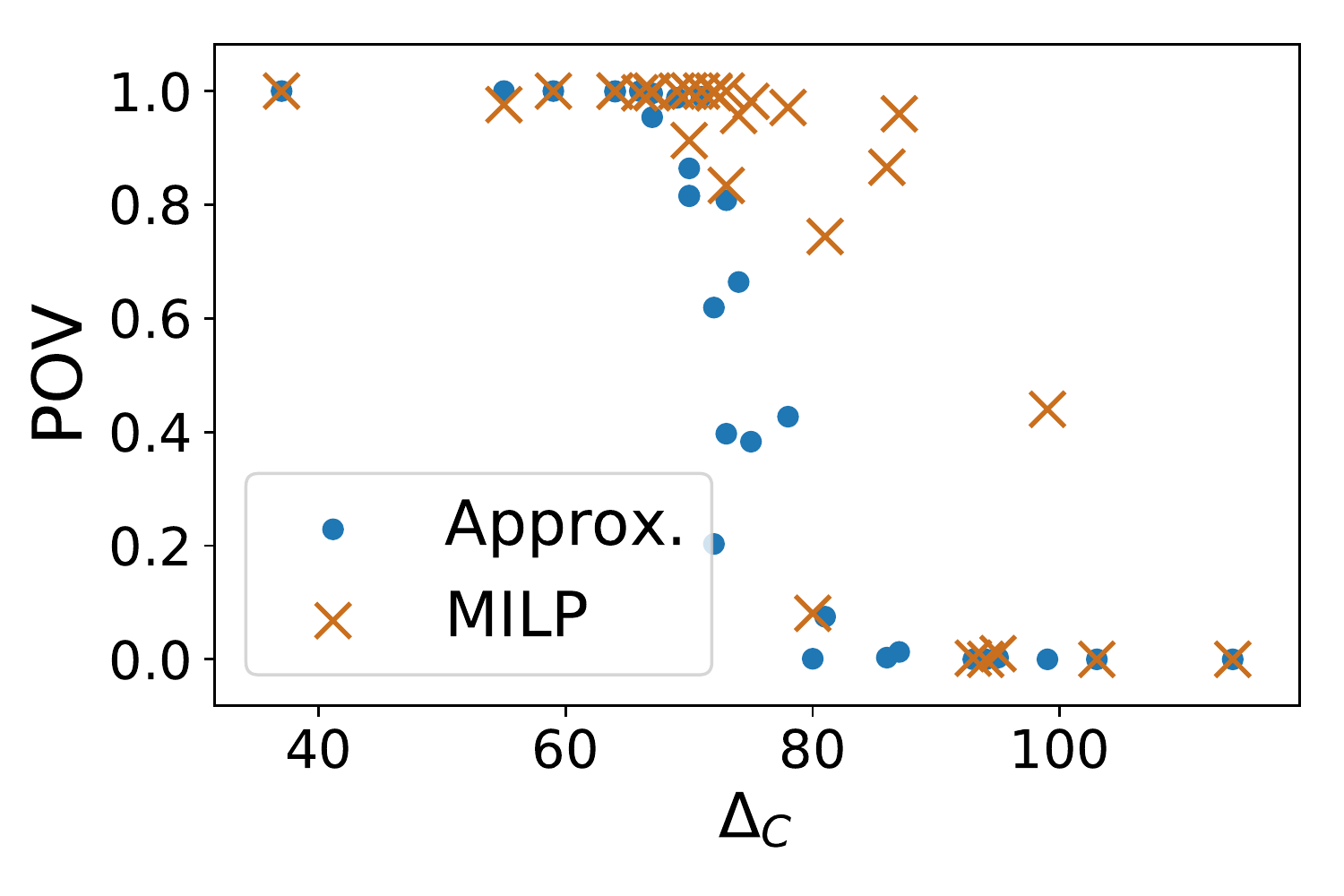}
	\includegraphics[width=2in]{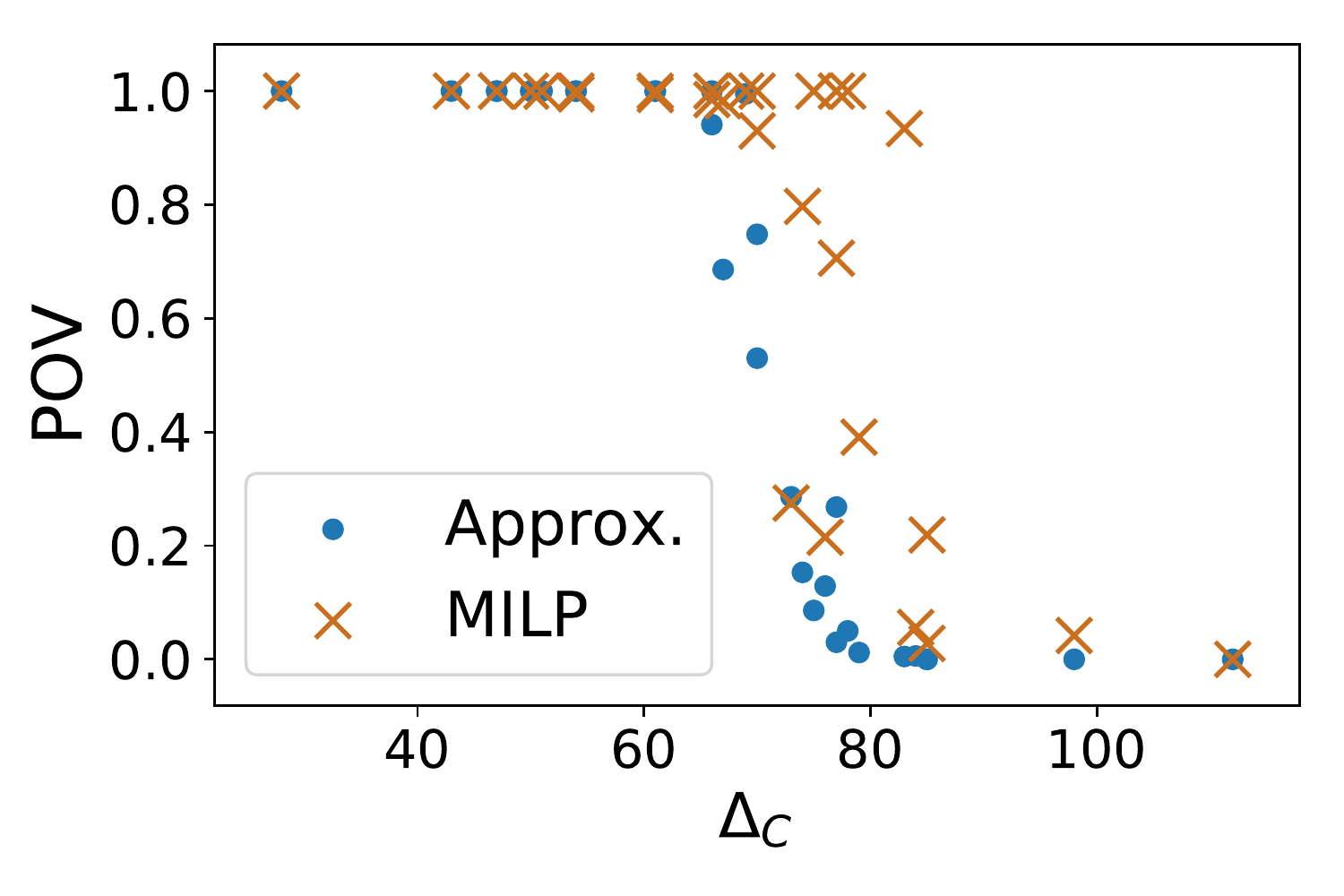}
	
	\includegraphics[width=2in]{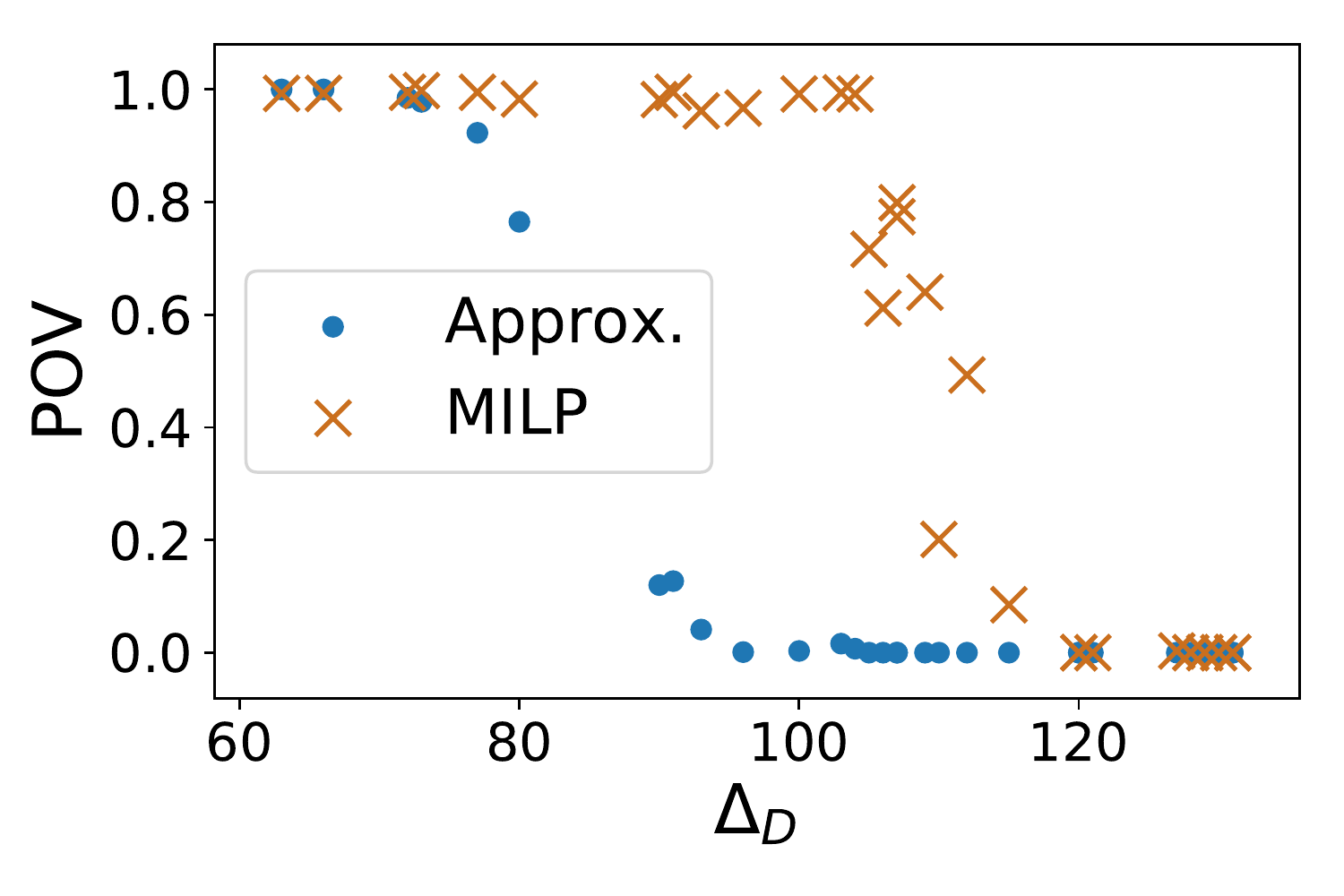}
	\includegraphics[width=2in]{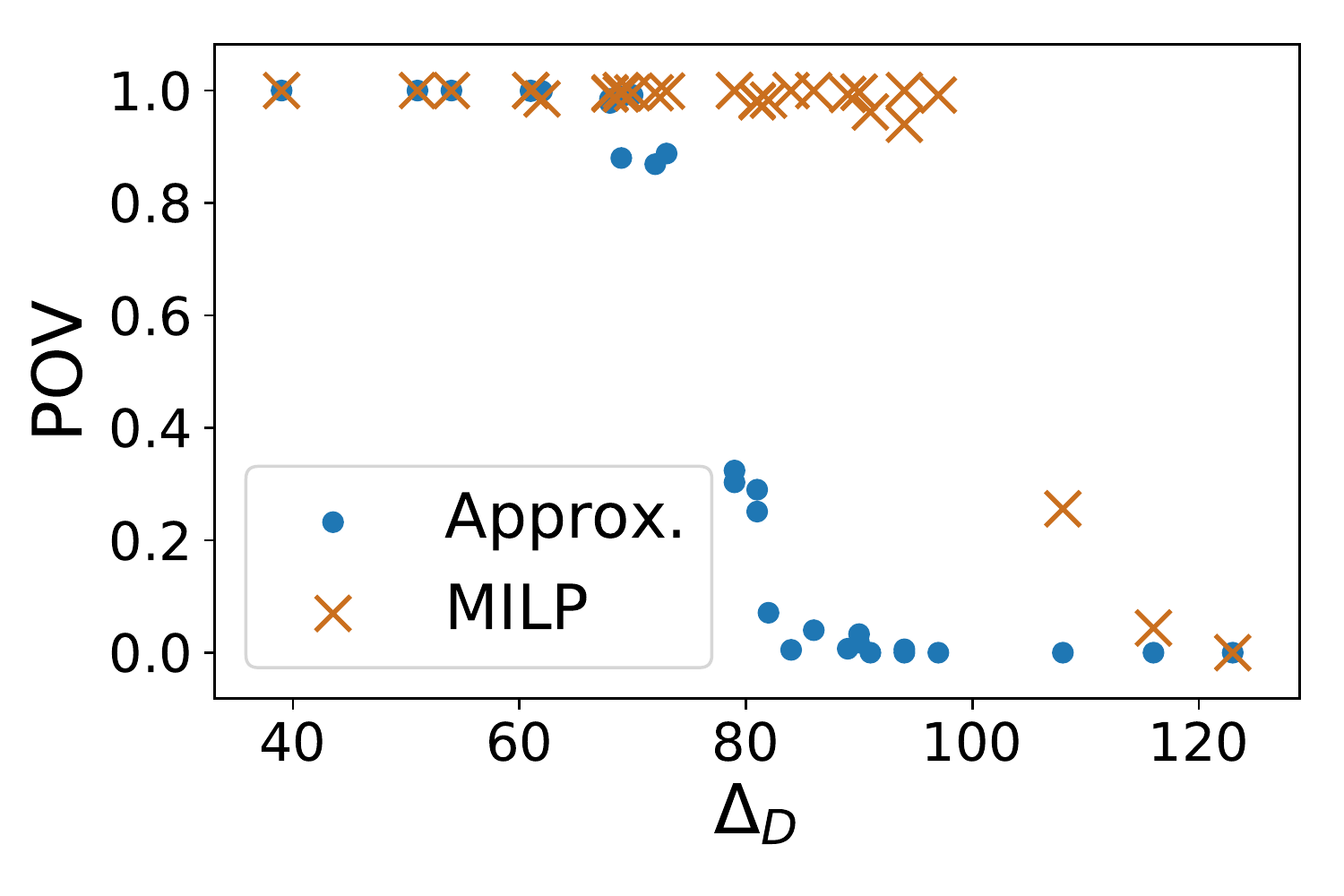}
	\caption{Probability of victory. Top: constructive. Bottom: destructive. Left: netscience. Right: polblogs.}\label{fig:pov}
\end{figure}

We now turn to the POV objectives. We show results for $k = 50, |C| = 5$, comparing our bicriteria approximation algorithms \textsc{POVConstructive} and \textsc{POVDestructive} to the corresponding MILP formulations. To keep the experiments timely, we ran each approximation algorithm for 150 random values of the threshold $\beta$ instead of enumerating over all (empirically, this resulted in very similar solution quality).  Figure \ref{fig:pov} shows the results on netscience and polblogs for constructive and destructive control. Results for facebook and irvine can be found in the appendix. The $x$ axis shows the starting margin ($\Delta_C$ or $\Delta_D$) in each randomly generated instance while the $y$ axis shows the probability of victory obtained. The instances fall into three groups. First, when the margin is small, both the approximation algorithms and the MILP have a high POV. Second, when the margin is large, both have a small POV. Third are intermediate points where the approximation algorithm and MILP strongly diverge. Averaged over all random instances, the approximation algorithm obtains 40-60\% of the MILP's value (depending on the network). However, there are instances among the intermediate cases where, e.g., the approximation algorithm obtains a POV of 0.1\%, but the MILP finds a solution with POV 99\%. We conclude that election control for the POV objective can be very computationally difficult in narrowly winnable elections. This dovetails with our theoretical results, which show that the problem is inapproximable in general, but bicriteria guarantees are possible when the election is winnable by a large margin.

\section{Conclusion}

Fake news and other targeted misinformation are an increasingly prevalent way of interfering with democratic elections. We introduce and study the problem of election control through social influence, providing algorithms and hardness results for maximizing both the margin and probability of victory for an attacker in both constructive and destructive control.  Our results indicate that social influence is a salient threat to election integrity, particularly in the MOV case where we provide high-quality approximation algorithms. Maximizing the probability of victory is manageable in easier instances, but difficult both theoretically and empirically in narrow races.
\fontsize{9.5}{10.5}
\selectfont
\bibliographystyle{plain}  
\bibliography{elections_bib}  
\appendix
\fontsize{10}{10}
\selectfont
\section{Proofs}

We now provide proofs that were deferred from the main text. We start out with with the full proofs for the $\mov_C$ and $\mov_D$ objectives:

\noindent\textbf{Theorem \ref{theorem:mov-c}: }\textit{
	\textsc{MOVConstructive} obtains a $\frac{1}{3}\left(1 - \frac{1}{e}\right)$-approximation to the $\mov_C$ problem with any number of candidates. }
\begin{proof}
	
	Let $c(S, y) = \arg\min_{c_i} f(V_{c_*}^2 \cap V_{c_i}^1) - |V_{c_i}^1|$ be the candidate achieving the minimum in the definition of $m_C$. Let $S^*$ be an optimal seed set. Note that for all scenarios $y$, seed sets $S$, and candidates $c_i$, $f(S, y, V_{c_*}^2) \geq f(S, y, V_{c_*}^2 \cap V_{c_i}^1)$. Hence, we have
	
	%
	%
	%
	
	\begin{align*}
	\E_y\Big[f\left(S^*, y, V_{c_*}^2\right)\Big]  &\geq \frac{1}{3} \E_y\Big[f\left(S^*, y, V_{c_*}^1\right) + f\left(S^*, y, V_{c_*}^1 \cap V_{c(S^*, y)}^1\right) \\&+ f\left(S^*, y, V_{c_*}^1 \cap V_{c(S, y)}^1\right)\Big]\\
	\end{align*}
	
	Note that $\E_y[f(\cdot, y, V_{c_*}^2)]$ is a monotone submodular function, which \textsc{MOVConstructive} greedily maximizes. Let $S$ be the resulting seed set. We have 
	
	\begin{align*}
	\E_y\Big[f\left(S, y, V_{c_*}^2\right) + f\left(S, y, V_{c_*}^2 \cap V_{c(S, y)}^1\right)\Big] &\geq \E_y\Big[f\left(S, y, V_{c_*}^2\right)\Big]\\
	&\geq \frac{1}{3}\left(1 - \frac{1}{e}\right) \E_y\Big[f\left(S^*, y, V_{c_*}^2\right) +  f\left(S^*, y, V_{c_*}^2 \cap V_{c(S^*, y)}^1\right) \\&+ f\left(S^*, y, V_{c_*}^2 \cap V_{c(S, y)}^1\right)\Big]
	\end{align*}
	
	and so
	\allowdisplaybreaks
	\begin{align*}
	\mov_C(S) &= \E_y\Big[f\left(S, y, V_{c_*}^2\right) + \min_{c_j}f\left(S, y, V_{c_*}^2 \cap V_{c_j}^1\right) + \max_{c_i} |V_{c_i}^1| - |V_{c_j}^1|\Big]\\
	&=\E_y\Big[f\left(S, y, V_{c_*}^2\right) + f\left(S, y, V_{c_*}^2 \cap V_{c(S, y)}^1\right)\Big] + \max_{c_i} |V_{c_i}^1| - \E_y\Big[|V_{c(S, y)}^1|\Big]\\
	&\geq \frac{1}{3}\left(1 - \frac{1}{e}\right) \E_y\Big[f\left(S^*, y, V_{c_*}^1\right) + f\left(S^*, y, V_{c_*}^2 \cap V_{c(S^*, y)}^1\right) \\&\quad\quad\quad\quad+ f\left(S^*, y, V_{c_*}^2 \cap V_{c(S, y)}^1\right)\Big] + \max_{c_i} |V_{c_i}^1| - \E_y\Big[|V_{c(S, y)}^1|\Big]\\
	&\geq \frac{1}{3}\left(1 - \frac{1}{e}\right) \E_y\Big[f\left(S^*, y, V_{c_*}^1\right) + f\left(S^*, y, V_{c_*}^2 \cap V_{c(S^*, y)}^1\right) \\&\quad\quad\quad\quad+ f\left(S^*, y, V_{c_*}^1 \cap V_{c(S, y)}^1\right) + \max_{c_i} |V_{c_i}^1| - |V_{c(S, y)}^1|\Big]\\
	&=\frac{1}{3}\left(1 - \frac{1}{e}\right) \E_y\Big[f\left(S^*, y, V_{c_*}^1\right) + f\left(S^*, y, V_{c_*}^2 \cap V_{c(S^*, y)}^1\right)  \\&+f\left(S^*, y, V_{c_*}^1 \cap V_{c(S, y)}^1\right) + \max_{c_i} |V_{c_i}^1| - |V_{c(S, y)}^1| + |V_{c(S^*, y)}^1| - |V_{c(S^*, y)}^1|\Big]\\
	&=\frac{1}{3}\left(1 - \frac{1}{e}\right) \E_y\Big[f\left(S^*, y, V_{c_*}^1\right) + \min_{c_j} \left(f(S^*, y, V_{c_*}^2 \cap V_{c_j}^1) - |V_{c_j}^1|\right) \\&\quad\quad\quad\quad+ f(S^*, y, V_{c_*}^2 \cap V_{c(S, y)}^1)  + \max_{c_i} |V_{c_i}^1| - |V_{c(S, y)}^1| + |V_{c(S^*, y)}^1|\Big]\\
	&= \frac{1}{3}\left(1 - \frac{1}{e}\right)\Bigg(\mov_C(S^*) + \E_y\Big[f\left(S^*, y, V_{c_*}^2 \cap V_{c(S, y)}^1\right) + |V_{c(S^*, y)}^1| - |V_{c(S, y)}^1|\Big]\Bigg).
	\end{align*}
	
	Now by definition of $c(S^*, y)$, $f(S^*, y, V_{c_*}^2 \cap V_{c(S^*, y)}^1) - |V_{c(S^*, y)}^1| \leq f(S^*, y, V_{c_*}^2 \cap V_{c(S, y)}^1) - |V_{c(S, y)}^1|$ and so 
	
	\begin{align*}
	|V_{c(S^*, y)}^1|- |V_{c(S, y)}^1| \geq f\left(S^*, y, V_{c_*}^2 \cap V_{c(S^*, y)}^1\right) - f\left(S^*, y, V_{c_*}^2 \cap V_{c(S, y)}^1\right)
	\end{align*}
	
	This yields
	
	\begin{align*}
	\mov_C(S) &\geq \frac{1}{3}\left(1 - \frac{1}{e}\right)\Bigg(\mov_C(S^*) + \E_y\Big[f\left(S^*, y, V_{c_*}^2 \cap V_{c(S, y)}^1\right) + \\&\quad\quad\quad\quad f\left(S^*, y, V_{c_*}^2 \cap V_{c(S^*, y)}^1\right) - f\left(S^*, y, V_{c_*}^2 \cap V_{c(S, y)}^1\right)\Big]\Bigg)\\
	&=\frac{1}{3}\left(1 - \frac{1}{e}\right)\Bigg(\mov_C(S^*) + \E_y\Big[f\left(S^*, y, V_{c_*}^2 \cap V_{c(S^*, y)}^1\right)\Big]\Bigg)\\
	&\geq \frac{1}{3}\left(1 - \frac{1}{e}\right)\mov_C(S^*).
	\end{align*}

\end{proof}

\noindent\textbf{Theorem \ref{theorem:mov-d}: }\textit{
	\textsc{MOVDestructive} obtains a $\frac{1}{2}\left(1 - \frac{1}{e}\right)$-approximation to the multicandidate $\mov_D$ problem.}

\begin{proof}
	Now let $c(S, y) = \arg\max_{c_i} f(V_{c_*}^1 \cap V_{c_i}^2) + |V_{c_i}^1|$ be the candidate achieving the maximum in the definition of $m_D$. Let $S^*$ be an optimal seed set. Similarly to before, we have
	
	\begin{align*}
	\E_y\Big[f\left(S^*, y, V_{c_*}^1\right)\Big] \geq \frac{1}{2} \E_y\Big[f\left(S^*, y, V_{c_*}^1\right) + f\left(S^*, y, V_{c_*}^1 \cap V_{c(S^*, y)}^2\right)\Big].
	\end{align*}
	
	\textsc{MOVDestructive} greedily maximizes $\E_y\Big[f\left(S^*, y, V_{c_*}^1\right)\Big]$. Call the resulting seed set $S$. We have
	\allowdisplaybreaks
	\begin{align*}
	\mov_D(S) &= \E_y\Big[f\left(S, y, V_{c_*}^1\right) + f\left(S, y, V_{c_*}^1 \cap V_{c(S, y)}^2\right) + |V_{c(S, y)}^1| - \max_{c_i}|V_{c_i}^1|\Big]\\
	&\geq \left(1 - \frac{1}{e}\right)\E_y\Big[f\left(S^*, y, V_{c_*}^1\right)\Big] + \E_y\Big[f\left(S, y, V_{c_*}^1 \cap V_{c(S, y)}^2\right) + |V_{c(S, y)}^1| - \max_{c_i}|V_{c_i}^1|\Big]\\
	&\geq \frac{1}{2}\left(1 - \frac{1}{e}\right)\E_y\Big[f\left(S^*, y, V_{c_*}^1\right)  + f\left(S^*, y, V_{c_*}^1 \cap V_{c(S^*, y)}^2\right)\Big] \\&\quad\quad\quad\quad\quad+\E_y\Big[f\left(S, y, V_{c_*}^1 \cap V_{c(S, y)}^2\right) + |V_{c(S, y)}^1| - \max_{c_i}|V_{c_i}^1|\Big]\\
	&\geq \frac{1}{2}\left(1 - \frac{1}{e}\right)\E_y\Big[f\left(S^*, y, V_{c_*}^1\right)  + f\left(S^*, y, V_{c_*}^1 \cap V_{c(S^*, y)}^2\right) \\&\quad\quad\quad\quad\quad+f\left(S, y, V_{c_*}^1 \cap V_{c(S, y)}^2\right) + |V_{c(S, y)}^1| - \max_{c_i}|V_{c_i}^1|\Big]\\
	&\geq \frac{1}{2}\left(1 - \frac{1}{e}\right)\E_y\Big[f\left(S^*, y, V_{c_*}^1\right)  + f\left(S^*, y, V_{c_*}^1 \cap V_{c(S^*, y)}^2\right) \\&\quad\quad\quad\quad\quad+f\left(S, y, V_{c_*}^1 \cap V_{c(S, y)}^2\right) + |V_{c(S, y)}^1| + |V_{c(S^*, y)}^1| - |V_{c(S^*, y)}^1|- \max_{c_i}|V_{c_i}^1|\Big]\\
	&\geq \frac{1}{2}\left(1 - \frac{1}{e}\right)\Big[\mov_D(S^*) +\E_y \Big[f\left(S, y, V_{c_*}^1 \cap V_{c(S, y)}^2\right) + |V_{c(S, y)}^1| - |V_{c(S^*, y)}^1|\Big]\Big].
	\end{align*}
	
	Now using the definition of $c(S, y)$, we have that $f(S, y, V_{c_*}^1 \cap V_{c(S, y)}^2) + |V_{c(S, y)}^1| \geq f(S, y, V_{c_*}^1 \cap V_{c(S^*, y)}^2) + |V_{c(S^*, y)}^1|$. This yields
	
	\begin{align*}
	|V_{c(S, y)}^1| - |V_{c(S^*, y)}^1| \geq f\left(S, y, V_{c_*}^1 \cap V_{c(S^*, y)}^2\right) - f\left(S, y, V_{c_*}^1 \cap V_{c(S, y)}^2\right)
	\end{align*}
	
	and so we have
	
	\begin{align*}
	\mov_D(S) &\geq \frac{1}{2}\left(1 - \frac{1}{e}\right)\Big[\mov_D(S^*) +\E_y \Big[f\left(S, y, V_{c_*}^1 \cap V_{c(S^*, y)}^2\right)\Big]\Big]\\
	&\geq \frac{1}{2}\left(1 - \frac{1}{e}\right)\mov_D(S^*)
	\end{align*}
	
\end{proof}

We now prove corresponding bicriteria guarantees for the $\pov$ objectives.

\noindent\textbf{Theorem \ref{theorem:povc}: }\textit{
	Let $OPT(\Delta)$ denote the optimal value of the problem $\max_{|S| \leq k} \pr_y\left[m_C(S, y) \geq \Delta\right]$. Let $S$ be the set produced by \textsc{POVConstructive}. We have
	\begin{align*}
	\pov_C(S) \geq \max_{0 < \alpha < 1} \frac{\frac{e-1}{3e-1} OPT\left(\frac{1}{\alpha} \Delta_C\right) - \alpha}{1 - \alpha}
	\end{align*} }

\begin{proof} 
	
	The main difference from the two candidate case is that $\frac{1}{m}\sum_y \min\left(\beta, m_C(S, y)\right)$ is no longer submodular since $m_C$ need not be a submodular function. However, the proof of Theorem 4.3 only uses submodularity in establishing an approximation guarantee for greedy optimization of the surrogate. In the multicandidate case, we will greedily optimize $\frac{1}{m}\sum_y \min\left(\beta, f(S, y, V_{c_*}^2)\right)$, which is submodular. Let $S_\beta$ be the resulting seed set and $S^*$ be a set that optimizes $\frac{1}{m}\sum_y \min\left(\beta, m_C(S, y)\right)$. If we can prove that $\frac{1}{m}\sum_y \min\left(\beta, m_C(S_\beta, y)\right) \geq  \gamma\frac{1}{m}\sum_y \min\left(\beta, m_C(S^*, y)\right)$ for some constant factor $\gamma$, then the same argument as in Theorem 4.3 extends to the multicandidate case. Fix any particular value of $\beta$. We establish a constant factor approximation as follows:
	
	\begin{align*}
	\frac{1}{m}\sum_y \min\left(\beta, m_C(S^*, y)\right) &=\frac{1}{m}\sum_y \min\left(\beta, m_C(S_\beta, y) + m_C(S^*, y) - m_C(S_\beta, y)\right)\\
	&\leq \frac{1}{m}\sum_y \min\Big(\beta, m_C(S_\beta, y) + f\left(S^*, y, V_{c_*}^2\right) - f\left(S_\beta, y, V_{c_*}^2\right) \\&\quad\quad\quad\quad\quad+ f\left(S^*, y, V_{c_*}^2 \cap V_{c(S^*,y)}^1\right) - f\left(S_\beta, y, V_{c_*}^2 \cap V_{c(S_\beta,y)}^1\right) + |V_{c(S_\beta, y)}^1| -  |V_{C(S^*, y)}^1|\Big)
	\end{align*}
	
	Via the definition of $c(S^*, y)$, we have that $|V_{c(S_\beta, y)}^1| -  |V_{c(S^*, y)}^1| \leq f\left(S^*, y, V_{c_*}^2 \cap V_{c(S_\beta, y)}^1\right) - f\left(S^*, y, V_{c_*}^2 \cap V_{c(S^*, y)}^1\right)$. This yields
	\allowdisplaybreaks
	\begin{align*}
	\frac{1}{m}\sum_y \min\left(\beta, m_C(S^*, y)\right) &\leq \frac{1}{m}\sum_y \min\Big(\beta, m_C(S_\beta, y) + f\left(S^*, y, V_{c_*}^2\right) + f\left(S^*, y, V_{c_*}^2 \cap V_{c(S_\beta, y)}^1\right)\Big)\\
	&\leq \frac{1}{m}\sum_y \min\Big(\beta, m_C(S_\beta, y) + 2f\left(S^*, y, V_{c_*}^2\right) \Big)\\
	&\leq \frac{1}{m}\sum_y \min\Big(\beta, m_C(S_\beta, y) \Big) + \frac{2}{m}\sum_y \min\Big(\beta, f\left(S^*, y, V_{c_*}^2\right) \Big)\\
	&\leq \frac{1}{m}\sum_y \min\Big(\beta, m_C(S_\beta, y) \Big) + \frac{1}{m}\frac{2e}{e-1}\sum_y \min\Big(\beta, f\left(S_\beta, y, V_{c_*}^2\right) \Big)\\
	&\leq \frac{1}{m}\sum_y \min\Big(\beta, m_C(S_\beta, y) \Big) + \frac{1}{m}\frac{2e}{e-1}\sum_y \min\Big(\beta, m_C(S_\beta, y) \Big)\\
	&\leq \left(1 + \frac{2e}{e-1}\right) \frac{1}{m}\sum_y \min\Big(\beta, m_C(S_\beta, y) \Big)
	\end{align*}
	
	and now the conclusion follows by applying the same argument as in Theorem \ref{theorem:povbicriteria}. 
\end{proof}

\noindent\textbf{Theorem \ref{theorem:povd}: }\textit{
	Let $OPT(\Delta)$ denote the optimal value of the problem $\max_{|S| \leq k} \pr_y\left[m_D(S, y) \geq \Delta\right]$. Let $S$ be the set produced by \textsc{POVDestructive}. We have
	\begin{align*}
	\pov_D(S) \geq \max_{0 < \alpha < 1} \frac{\frac{e-1}{3e-1} OPT(\frac{1}{\alpha} \Delta) - \alpha}{1 - \alpha}
	\end{align*} }

\begin{proof}
	Applying the reasoning as in Theorem \ref{theorem:povc}, we have
	\begin{align*}
	\frac{1}{m}\sum_y \min\left(\Delta, m_D(S^*, y)\right) &=\frac{1}{m}\sum_y \min\left(\Delta, m_C(S, y) + m_D(S^*, y) - m_D(S, y)\right)\\
	&\leq \frac{1}{m}\sum_y \min\Big(\Delta, m_D(S, y) + f\left(S^*, y, V_{c_*}^1\right) - f\left(S, y, V_{c_*}^1\right) \\&+ f\left(S^*, y, V_{c_*}^1 \cap V_{c(S^*,y)}^2\right) - f\left(S, y, V_{c_*}^1 \cap V_{c(S,y)}^2\right) + |V_{C(S^*, y)}^1| -  |V_{C(S, y)}^1| \Big).
	\end{align*}
	
	The definition of $c(S, y)$ implies that $$|V_{c(S^*, y)}^1| - |V_{c(S, y)}^1 \leq f\left(S, y, V_{c_*}^1 \cap V_{c(S, y)}^2\right) -  f\left(S, y, V_{c_*}^1 \cap V_{c(S^*, y)}^2\right)$$ so we have
	\\
	\begin{align*}
	\frac{1}{m}\sum_y \min\left(\Delta, m_D(S^*, y)\right)
	&\leq \frac{1}{m}\sum_y \min\Big(\Delta, m_D(S, y) + f\left(S^*, y, V_{c_*}^1\right) + f\left(S^*, y, V_{c_*}^1 \cap V_{c(S^*,y)}^2\right) \Big)\\
	&\leq \frac{1}{m}\sum_y \min\Big(\Delta, m_D(S, y) + 2f\left(S^*, y, V_{c_*}^1\right)\Big)
	\end{align*}
	
	and now the theorem follows from the same argument as in Theorem \ref{theorem:povc}.
\end{proof}

\section{Additional experimental results}

\begin{figure}[h]
	\centering
	\includegraphics[width=2in]{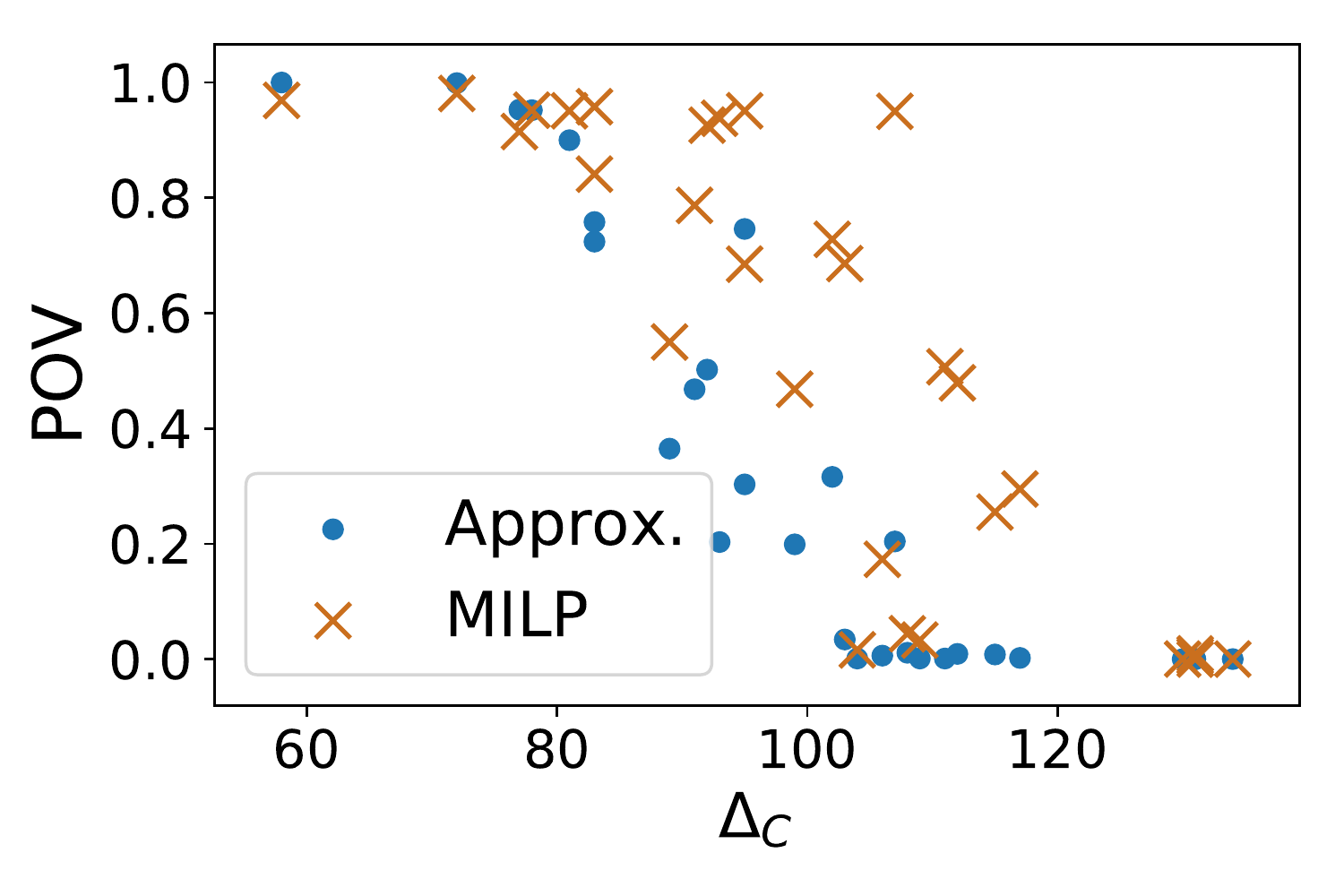}
	\includegraphics[width=2in]{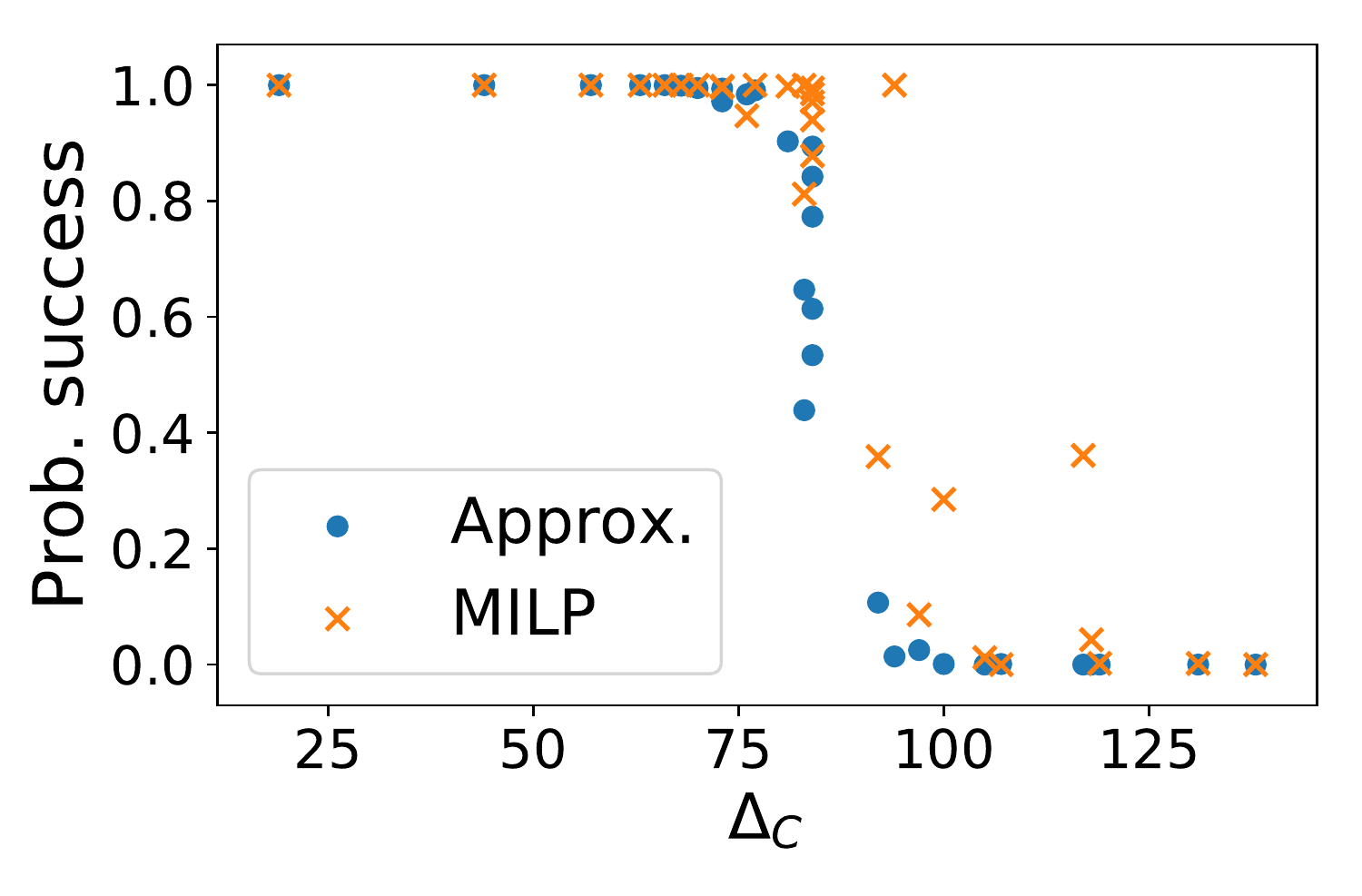}
	\caption{Probability of victory in constructive control. Left: irvine. Right: facebook}
\end{figure}

\begin{figure}[h]
	\centering
	\includegraphics[width=2in]{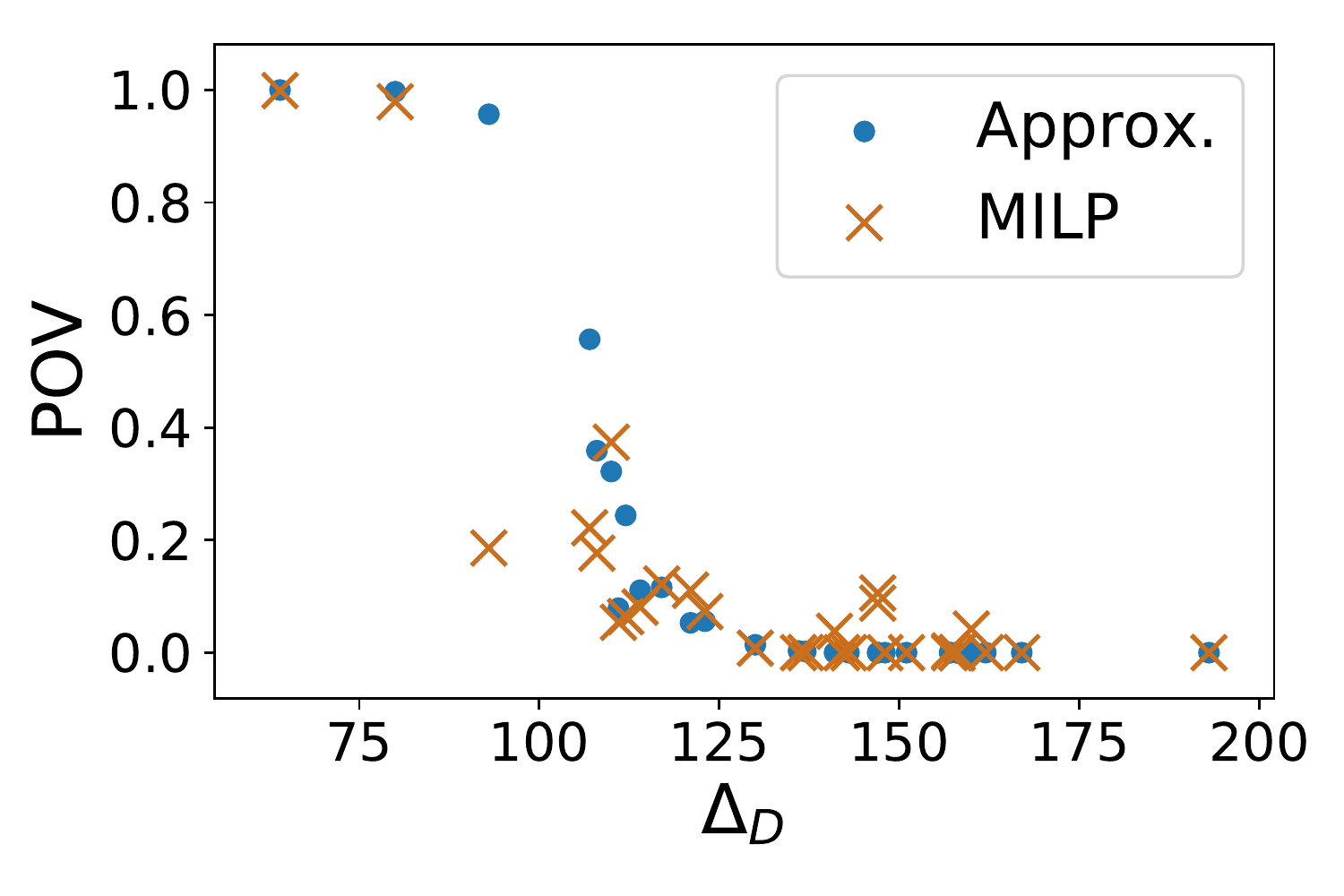}
	\includegraphics[width=2in]{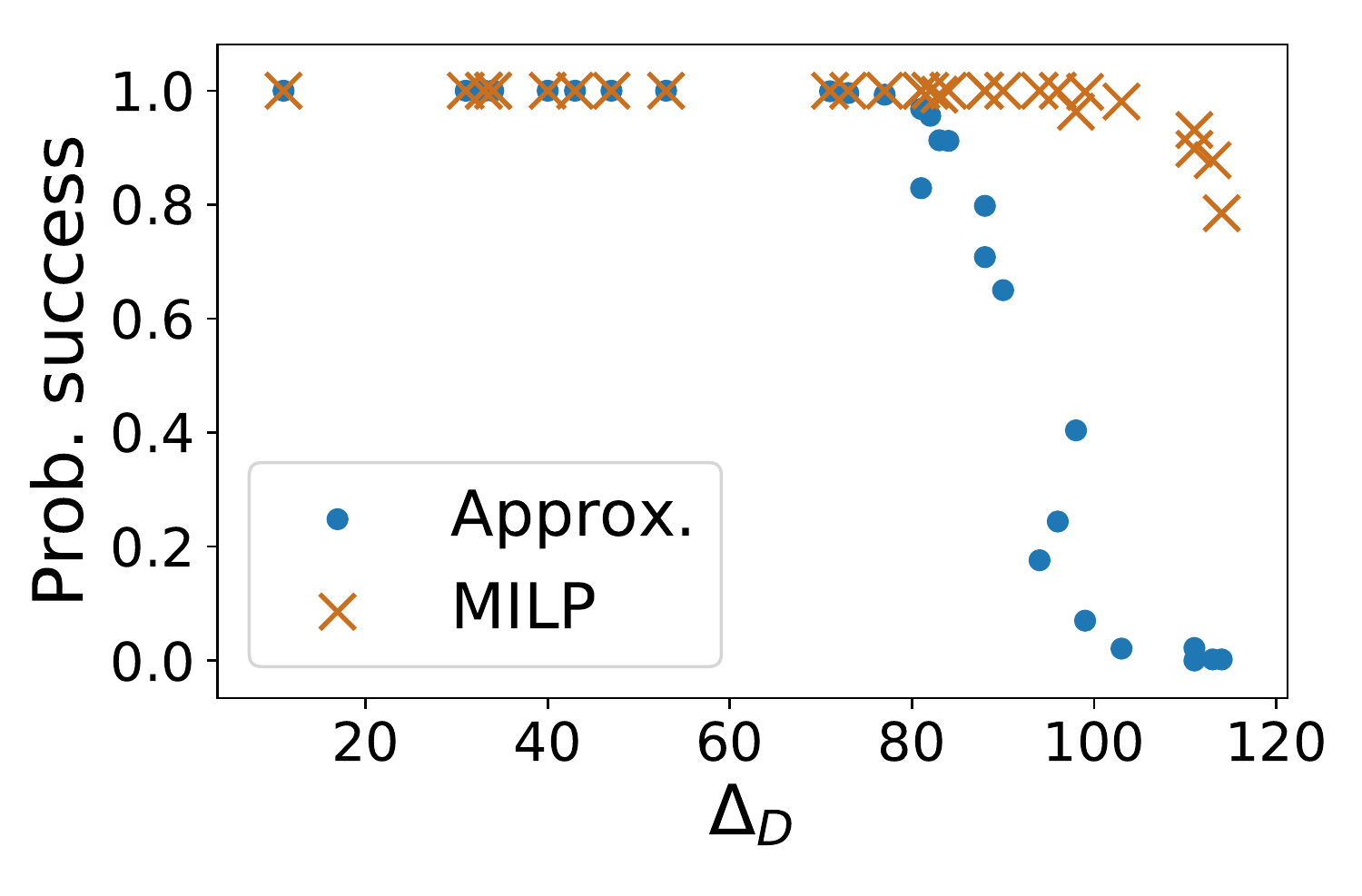}
	\caption{Probability of victory in destructive control. Left: irvine. Right: facebook. On irvine, the MILP was terminated after 24 hours, and had not found competitive solutions with the approximation algorithm on the intermediate margin instances by that time.}
\end{figure}

\end{document}